\documentclass[10pt, oneside]{article}   	% use "amsart" instead of "article" for AMSLaTeX format
\usepackage[left=26mm, top=26mm, right=26mm, bottom=26mm]{geometry}
\geometry{letterpaper}                   		% ... or a4paper or a5paper or ... 
\usepackage{graphicx}				% Use pdf, png, jpg, or eps§ with pdflatex; use eps in DVI mode
								% TeX will automatically convert eps --> pdf in pdflatex		
\usepackage{amssymb,amsmath,amsfonts,bbm}
\usepackage{amsthm, thmtools,theoremref}
\usepackage[utf8]{inputenc}
\usepackage{xcolor}
\usepackage{parskip}
\usepackage{bbm}
\usepackage{verbatim}
\usepackage{hyperref}
\usepackage{csquotes}
%\usepackage{courier}
%\usepackage[noabbrev]{cleveref}
%SetFonts
%SetFonts

\usepackage{appendix}

\usepackage{natbib}
\bibliographystyle{abbrvnat}

\newcommand{\norm}[1]{\left\lVert #1 \right\rVert}

\setlength\parindent{0pt}
\usepackage{fancyhdr}
\pagestyle{fancy}
\fancyhf{}

\fancyfoot[C]{\thepage}

\allowdisplaybreaks

\theoremstyle{plain}
\newtheorem{thm}{Theorem}[section]
\newtheorem{lem}[thm]{Lemma}
\newtheorem{prop}[thm]{Proposition}

\theoremstyle{definition}

\newtheorem{exmp}[thm]{Example}
\newtheorem{assume}[thm]{Assumption}

\theoremstyle{remark}
\newtheorem{rem}[thm]{Remark}

\newcommand{\R}{\mathbb{R}}
\newcommand{\PP}{\mathbb{P}}
\newcommand{\Q}{\mathbb{Q}}
\newcommand{\Ex}{\mathbb{E}}

\title{Relative performance criteria of multiplicative form in complete markets}
\author{Anastasiya Tanana\thanks{Department of Mathematics, The University of Texas at Austin, \texttt{atanana@utexas.edu}.}}
%\author{Anastasiya Tanana\thanks{Department of Mathematics, The University of Texas at Austin, \textit{atanana@utexas.edu}.\newline \textit{MSC2020 subject classifications:} Primary 93E20; secondary 91G80, 91B08. \newline \textit{Keywords and phrases:} utility maximization, convex duality, ratcheting of consumption, drawdown constraint on consumption, running maximum, incomplete markets.}}

%\date{}							% Activate to display a given date or no date

\begin{document}
\maketitle

%\tableofcontents

\begin{abstract}{\footnotesize
We consider existence and uniqueness of Nash equilibria in an $N$-player game of utility maximization under relative performance criteria of multiplicative form in complete semimartingale markets. For a large class of players' utility functions, a general characterization of Nash equilibria for a given initial wealth vector is provided in terms of invertibility of a map from $\R^N$ to $\R^N$. As a consequence of the general theorem, we derive existence and uniqueness of Nash equilibria for an arbitrary initial wealth vector, as well as their convergence, if either (i)~players' utility functions are close to CRRA, or (ii)~players' competition weights are small and relative risk aversions are bounded away from infinity.
}\end{abstract}

%%%%%%%%%%%%%%%%%%%%%%

\section{Introduction}

The present paper is a study of utility maximization under relative performance criteria \emph{of multiplicative form} in complete markets. With a common finite time horizon $T$, we consider $N$ agents (e.g. fund managers) each competing with others by maximizing their expected utility of terminal wealth discounted by the geometric average of the terminal wealths of others:
$$\sup_{X^{i}_T\in\mathcal{X}(x^i_0)}\Ex\left[U_i\left(\frac{X^i_T}{(\bar{X}^i_T)^{\lambda_i}}\right)\right],\quad \text{where}\quad \bar{X}_T^i=\left(\prod_{j\neq i}X_T^j\right)^{\frac{1}{N-1}} \text{ and } \lambda_i\in[0,1],\quad i=1,...,N.$$

The model is inspired in part by \cite{lacker_mean_2019}, where the multiplicative form of relative performance criteria, both for $N$-player and mean field games, is considered for CRRA utility functions and each agent trades between a common riskless bond and an individual stock. Each individual stock is given by a log-normal process driven by two independent Brownian motions: a common market noise and an idiosyncratic noise. The model of \cite{lacker_mean_2019} thus assumes asset specialization of agents and covers as a special case the complete market case with a log-normal price process and CRRA utilities. In the current paper, we study the complete market case in a more general setting, by assuming semimartingale stock price dynamics and considering a much more general class of utility functions.

The two-player case of multiplicative relative performance criteria with asset specialization and CRRA utilities was earlier studied in \cite{basak_competition_2010}. \cite{lacker_many-player_2020} extend the model of \cite{lacker_mean_2019} to a CRRA model incorporating intertemporal consumption, \cite{anthropelos_competition_2020} and \cite{dos_reis_forward_2022} consider forward relative performance criteria.

In parallel to the above formulation, the relative performance criteria \emph{of additive form},
$$\sup_{X^{i}_T\in\mathcal{X}(x^i_0)}\Ex\left[U_i\left(X^i_T -\lambda_i \tilde{X}^i_T\right)\right],\quad \text{where}\quad \tilde{X}_T^i=\frac{1}{N-1}\sum_{j\neq i}X_T^j\ \text{ and }\ \lambda_i\in[0,1],\quad i=1,...,N,$$
 were studied extensively in mathematical literature. \cite{frei_financial_2011} and \cite{espinosa_optimal_2015} study Nash equilibria in It\^o-process market models using BSDEs in the case of exponential utility functions and incorporating portfolio constraints. See also \cite{lacker_mean_2019}, \cite{bernardin_forward_2021}, \cite{yin_n-player_2022}, \cite{bauerle_nash_2023}, and references therein.

As noted in \cite{lacker_mean_2019}, a possible interpretation of the multiplicative form of relative performance criteria is that, in competing with others, agents are comparing their \emph{rates of return} rather than their \emph{absolute wealth}, as is the case in the additive formulation.

In the present paper, for multiplicative relative performance criteria in complete markets, the main complexity comes from considering utility functions beyond CRRA. First, we characterize Nash equilibria for an $N$-player game by applying a duality theorem, Theorem \ref{thm:duality-for-complete}, to $N$ individual utility maximization problems simultaneously, this is the content of Theorem \ref{thm:NE-duality}. According to Theorem \ref{thm:NE-duality}, each Nash equilibrium for a vector $(x_0^1,...,x_0^N)\in\R^N_+$ of players' initial wealths corresponds to a vector of ``dual variables" $(C^1,...,C^N)\in\R^N_+$ and the relation is given by a system of $N$ equations and $N$ budget constraints. The system of $N$ equations becomes more tractable once we pass to a logarithmic scaling of $\R_+$. After imposing uniform lower bounds on agents' relative risk aversions (see Proposition~\ref{prop:G-invertible}, Assumption~\ref{ass:utility2}, and Remark~\ref{rem:change-of-var}), the problem of existence and uniqueness of Nash equilibria for a \emph{given} initial wealth vector $(x_0^1,...,x_0^N)\in\R^N_+$ becomes the problem of inverting a map $h:\R^N\to\R^N$, $h(D^1,...,D^N)=(\ln{x_0^1},...,\ln{x_0^N})$, from the space of ``dual variables" to the space of initial wealth vectors (both considered in logarithmic scale). This result is summarized in the main Theorem \ref{thm:NE}. In two special cases, we establish that the map $h$ is \emph{globally} invertible from $\R^N$ to $\R^N$, i.e., there exists a unique Nash equilibrium for every initial wealth vector. These two cases are as follows: 1)~each agent has a relative risk aversion that is close to being constant, 2)~each agent has a small competition weight $\lambda_i$ and a uniformly bounded away from infinity relative risk aversion. Moreover, the optimal strategy profiles in cases 1) and 2) in the limit converge, respectively, to the optimal strategy profiles corresponding to 1) CRRA utilities and 2) $N$ decoupled optimization problems with $\lambda_i=0$. In order to invert the function $h$, we apply a global inverse function theorem for $\R^N$ (Theorem~\ref{thm:global-inverse}), which states that a $C^1$ map from $\R^N$ to $\R^N$ is a diffeomorphism if and only if it is proper and its Jacobian is non-degenerate at any point. As noted by \cite{gordon_diffeomorphisms_1972}, \textit{``This theorem goes back at least to Hadamard, but it does not appear to be ``well-known". Indeed, I have found that most people do not believe it when they see it..."}

The paper is organized as follows. Section \ref{section:model} introduces the market model, the set of admissible strategy profiles, and Nash equilibria in the $N$-player game of multiplicative relative performance criteria. Section~\ref{section:NE} contains all the main results of the paper concerning the characterization of Nash equilibria. In Section~\ref{section:inv-h}, the two special cases in which $h$ is globally invertible are considered. Appendix~\ref{appendix:duality} contains the duality theorem, Theorem \ref{thm:duality-for-complete}, for individual utility maximization problems which is used in the proof of Theorem~\ref{thm:NE-duality}. Theorem \ref{thm:duality-for-complete} is similar in spirit to Theorem 2.0 of \cite{kramkov_asymptotic_1999} but more general in the sense that it incorporates a num\' eraire in the formulation of the utility maximization problem. Theorem~\ref{thm:duality-for-complete} and \cite[Theorem 2.0]{kramkov_asymptotic_1999} considered together lead to an additional observation of independent interest regarding (non-)existence of optimizers (see Remark~\ref{rem:non-existence-optimizer}). Appendix~\ref{appendix:G-inv} contains the proof of Proposition~\ref{prop:G-invertible} justifying Assumption~\ref{ass:utility2} on the utilities of the players. Appendix~\ref{appendix:proofs} contains the proofs for Section~\ref{section:inv-h}.

%%%%%%%%%%%%%%%%%%%%%%

\section{Model}\label{section:model}

We consider a complete semimartingale market model (cf. \cite{kramkov_asymptotic_1999}). There is one bond with zero interest rate and $d$ risky assets. The price process $S=(S^i)_{1\leq i\leq d}$ of the risky assets is a semimartingale on a filtered probability space $(\Omega, \mathcal{F}, (\mathcal{F}_t)_{t\in[0,T]},\PP)$, where the filtration $(\mathcal{F}_t)_{t\in[0,T]}$ satisfies the usual conditions and $T$ is a finite time horizon. 

A self-financing portfolio is a pair $(x,H)$ of initial wealth $x\in\R_+=(0,\infty)$ and predictable $S$-integrable portfolio strategy $H=(H^i)_{1\leq i\leq d}$ specifying the amount of each asset held in the portfolio. The wealth process associated with a self-financing portfolio $(x,H)$ is given by $X_t=x+\int_0^t H_udS_u$ for $t\in[0,T]$. The set of \emph{$x$-admissible self-financing wealth processes} is defined as
$$\tilde{\mathcal{X}}(x)=\left\{X_t=x+\int_0^t H_udS_u \text{ for } t\in[0,T]: X_t\geq 0\ \forall t\in[0,T],\ \PP\text{-a.s.}\right\}.$$
A probability measure $\Q\sim\PP$ is called an \textit{equivalent local martingale measure (ELMM)} if every $X\in\tilde{\mathcal{X}}(1)$ is a local martingale under $\Q$. A standard assumption, related to the absence of arbitrage opportunities in the market, is that the set $\mathcal{M}$ of equivalent local martingale measures is non-empty (see \cite{delbaen_fundamental_1998} and also Proposition~\ref{prop:primal-dom} below). In this paper, we consider a complete market case.
\begin{assume}\label{ass:mart-meas}
The market is complete, $\mathcal{M}=\{\Q\}$ is a singleton.
\end{assume}
The relative density of $\Q$ with respect to $\PP$ is denoted by $Z$ throughout the paper, $Z=\frac{d\Q}{d\PP}>0$. The expectation under the probability measure $\Q$ is denoted by $\Ex^\Q$. It can be seen that Assumption \ref{ass:mart-meas} automatically implies $\mathcal{F}=\mathcal{F}_T$.

Let us denote $\mathcal{X}(x)=\{X_T\geq0: X\in\tilde{\mathcal{X}}(x)\}$, the set of terminal wealths of all $x$-admissible self-financing wealth processes. Since every $X\in\tilde{\mathcal{X}}(x)$ is a non-negative local martingale with respect to $\Q$, $\Ex^\Q[X_T]\leq x$ for every $X_T\in\mathcal{X}(x)$. The converse of this statement is summarized in Proposition~\ref{prop:primal-dom} below, in particular, if a non-negative $\mathcal{F}$-measurable random variable $g$ satisfies $\Ex^\Q[g]=x$ then there exists a uniformly integrable $\Q$-martingale $X_t=x+\int_0^t HdS\in\tilde{\mathcal{X}}(x)$ such that $g=X_T$.

Each agent solves their individual utility maximization problem described below, with their utility function belonging to the following standard class:

\begin{assume}\label{ass:utility}
The utility function $U:(0,\infty)\to\R$ is increasing, strictly concave, continuously differentiable, and satisfies the Inada conditions $U'(0)=\lim_{x\to 0}U'(x)=\infty$ and $U'(\infty)=\lim_{x\to\infty}U'(x)=0$.
\end{assume}

It is well-known that if $U$ satisfies Assumption \ref{ass:utility} then its conjugate function $\tilde{U}$, defined by $\tilde{U}(y)=\sup_{x>0}[U(x)-xy]$ for $y>0$, is decreasing, strictly convex, continuously differentiable on $(0,\infty)$, and satisfies the Inada conditions $\tilde{U}'(0)=-\infty$ and $\tilde{U}'(\infty)=0$.

We consider utility maximization under relative performance concerns for $N\geq 2$ agents. Denote $\mathcal{N}=\{1,...,N\}$ and let $U_i$ be agent~$i$'s utility function and $x^i_0>0$ his initial wealth. We define the set of \emph{admissible strategy profiles} as the set of all $N$-tuples $(X_T^1,...,X_T^N)\in\mathcal{X}(x^1_0)\times...\times\mathcal{X}(x^N_0)$ of terminal wealths such that $X^i_T>0$ for every $i\in\mathcal{N}$. We are interested in Nash equilibria in an $N$-player game in which each agent~$i$ solves the following expected utility maximization problem
\begin{equation}\label{eq:util-max}
\sup_{X^{i}_T\in\mathcal{X}(x^i_0)}\Ex\left[U_i\left(\frac{X^i_T}{(\bar{X}^i_T)^{\lambda_i}}\right)\right],\quad i\in\mathcal{N},
\end{equation}
where $\bar{X}_T^i=\left(\prod_{j\neq i}X_T^j\right)^{\frac{1}{N-1}}$ is the geometric average of the other players' terminal wealth and $\lambda_i\in[0,1]$ is the agent's \emph{competition weight}. We define a \emph{Nash equilibrium} for \eqref{eq:util-max} to be any admissible strategy profile $(X_T^{1,*},...,X_T^{N,*})\in\mathcal{X}(x^1_0)\times...\times\mathcal{X}(x^N_0)$ such that $X_T^{i,*}$ is the optimizer for \eqref{eq:util-max} given $(X_T^{j,*})_{j\neq i}$ and $\Ex\left[U_i\left(\frac{X^{i,*}_T}{(\bar{X}^{i,*}_T)^{\lambda_i}}\right)\right]<\infty$ for every $i\in\mathcal{N}$. Hence, our definition of the Nash equilibrium \emph{excludes} the cases where at least one of the agents obtains an infinite expected utility value. For a random variable $W$ with positive and negative parts $W^+$ and $W^-$ respectively, we use the notation $\Ex[W]=\infty$ when $\Ex[W^+]=\infty$ and $\Ex[W]=-\infty$ when $\Ex[W^-]=\infty$, with a convention that
\begin{equation}\label{eq:convention}
\Ex[W^+]=\Ex[W^-]=\infty\quad \Rightarrow \quad \Ex[W]=-\infty.
\end{equation}

%%%%%%%%%%%%%%%%%%%%%%

\section{Characterization of Nash equilibria}\label{section:NE}

We now state a characterization of Nash equilibria in a complete market. By abuse of notation, when introducing vectors $(C^1,...,C^N)\in\R_+^N$, $(D^1,...,D^N)\in\R^N$, etc., we define them to be \emph{column} vectors.

\begin{thm}\label{thm:NE-duality}
Suppose that Assumption \ref{ass:mart-meas} holds and that the utility function $U_i$ satisfies Assumption~\ref{ass:utility} for every $i\in\mathcal{N}$. Let $(X^{1,*}_T,...,X^{N,*}_T)\in\mathcal{X}(x^1_0)\times...\times\mathcal{X}(x^N_0)$ be such that $X^{i,*}_T>0$ for all $i\in\mathcal{N}$. Then, the following are equivalent:
\begin{enumerate}
\item[1)] $(X^{1,*}_T,...,X^{N,*}_T)$ is a Nash equilibrium in $\mathcal{X}(x^1_0)\times...\times\mathcal{X}(x^N_0)$;
\item[2)] $\Ex^\Q[X^{i,*}_T]=x^i_0$ and there exists a vector of constants $(C^1,...,C^N)\in\R_+^N$ such that \newline $U'_i\left(\frac{X^{i,*}_T}{(\bar{X}^{i,*}_T)^{\lambda_i}}\right)=C^iZ(\bar{X}^{i,*}_T)^{\lambda_i}$ and $\Ex\left[\tilde{U}_i\left(C^iZ(\bar{X}^{i,*}_T)^{\lambda_i}\right)\right]<\infty$ for all $i\in\mathcal{N}$.
 \end{enumerate}
\end{thm}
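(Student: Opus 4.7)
The plan is to reduce the $N$-player game to $N$ simultaneous single-agent utility maximization problems and then invoke the duality result of the appendix (Theorem \ref{thm:duality-for-complete}) for each one. Fix any candidate profile $(X^{1,*}_T,\ldots,X^{N,*}_T)\in\mathcal{X}(x^1_0)\times\cdots\times\mathcal{X}(x^N_0)$ with $X^{i,*}_T>0$ for every $i\in\mathcal{N}$. Freezing the opponents, the random variable
\[
N^i := (\bar{X}^{i,*}_T)^{\lambda_i}=\Bigl(\prod_{j\neq i}X^{j,*}_T\Bigr)^{\lambda_i/(N-1)}
\]
is a strictly positive $\mathcal{F}_T$-measurable random variable, and player $i$'s best-response problem reads
\[
\sup_{X^i_T\in\mathcal{X}(x^i_0)}\Ex\bigl[U_i(X^i_T/N^i)\bigr].
\]
By the definition of Nash equilibrium adopted in this paper, statement 1) is exactly the conjunction, over all $i\in\mathcal{N}$, of the assertion that $X^{i,*}_T$ attains this supremum with finite value.

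I would then apply Theorem \ref{thm:duality-for-complete} to each individual problem with utility $U_i$, numéraire $N^i$, and initial wealth $x^i_0$. That theorem should characterize optimality of a strictly positive $X^{i,*}_T\in\mathcal{X}(x^i_0)$ with finite value by three conditions: the budget binds, $\Ex^\Q[X^{i,*}_T]=x^i_0$; a Lagrangian first-order relation $U_i'(X^{i,*}_T/N^i)=C^i Z N^i$ holds for some $C^i>0$ (the extra factor of $N^i$ relative to the classical no-numéraire case arises from the chain rule upon differentiating the Lagrangian $\Ex\bigl[U_i(X^i_T/N^i)-C^i Z X^i_T\bigr]$ in $X^i_T$); and the dual value is finite, $\Ex\bigl[\tilde U_i(C^i Z N^i)\bigr]<\infty$. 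Rewriting these three conditions with $N^i=(\bar{X}^{i,*}_T)^{\lambda_i}$ and conjoining them across $i\in\mathcal{N}$ is precisely statement 2).

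The substantive content is thus entirely packaged in Theorem \ref{thm:duality-for-complete}, and at the level of the present theorem the only verification is that $N^i$ is positive a.s., which is immediate from $X^{j,*}_T>0$ for $j\neq i$. The step I expect to require the most care is the backward implication inside each individual duality: given only the pointwise equation $U_i'(X^{i,*}_T/N^i)=C^iZN^i$ together with $\Ex^\Q[X^{i,*}_T]=x^i_0$ and $\Ex[\tilde U_i(C^iZN^i)]<\infty$, one must conclude optimality of $X^{i,*}_T$ over all of $\mathcal{X}(x^i_0)$, not merely among competing wealths satisfying the same first-order equation. This global optimality is delivered by the strict concavity of $U_i$ and the duality-gap argument in Theorem \ref{thm:duality-for-complete}; once that result is cited, no further manipulation is needed at the level of the Nash-equilibrium characterization beyond the reduction above.
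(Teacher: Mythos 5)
Your reduction is exactly the paper's: freeze the opponents, set $L_i=(\bar{X}^{i,*}_T)^{\lambda_i}$ as the num\'eraire, and apply Theorem \ref{thm:duality-for-complete} to each of the $N$ individual problems simultaneously. The one point you get wrong is your claim that ``the only verification is that $N^i$ is positive a.s.'': Theorem \ref{thm:duality-for-complete} also requires $\Ex^\Q[L]<\infty$, and this is the only hypothesis whose verification takes any work. The paper checks it via Jensen's inequality for the concave map $x\mapsto x^{\lambda_i}$ followed by H\"older's inequality for the geometric mean, giving
\[
\Ex^\Q\bigl[(\bar{X}^{i,*}_T)^{\lambda_i}\bigr]\leq \Bigl(\Ex^\Q[\bar{X}^{i,*}_T]\Bigr)^{\lambda_i}\leq \Bigl(\prod_{j\neq i}\Ex^\Q[X^{j,*}_T]\Bigr)^{\frac{\lambda_i}{N-1}}\leq \Bigl(\prod_{j\neq i}x_0^j\Bigr)^{\frac{\lambda_i}{N-1}}<\infty,
\]
using that each $X^{j,*}$ is a nonnegative $\Q$-local martingale, hence a supermartingale. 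Without this bound you are not entitled to invoke the duality theorem at all, so you should add it; once it is in place, your argument coincides with the paper's proof. (Your parenthetical heuristic for where the extra factor of $N^i$ comes from is fine as motivation, and your remark that the backward implication rests on the duality-gap argument inside Theorem \ref{thm:duality-for-complete} correctly locates where the real work lives.)
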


\begin{proof}[Proof of Theorem~\ref{thm:NE-duality}]
For every $i\in\mathcal{N}$, we consider the terminal wealths $(X_T^{j,*})_{j\neq i}$ of the other players fixed and apply Theorem \ref{thm:duality-for-complete} with utility function $U_i$ and with $L=L_i=(\bar{X}^{i,*}_T)^{\lambda_i}>0$. Note that by Jensen's inequality for the concave function $x\mapsto x^{\lambda_i}$ and by H\"older's inequality we have
$$\Ex^\Q[L]\leq \left(\Ex^\Q[\bar{X}^{i,*}_T]\right)^{\lambda_i}\leq \left(\prod_{j\neq i}\Ex^\Q[X^{j,*}_T]\right)^\frac{\lambda_i}{N-1}=\left(\prod_{j\neq i}x_0^j\right)^\frac{\lambda_i}{N-1}<\infty$$
and that each optimizer obtained by Theorem \ref{thm:duality-for-complete} is automatically strictly positive and has a finite expected utility value. The joint application of Theorem \ref{thm:duality-for-complete} for individual utility maximization problems yields the stated characterization of Nash equilibria.
\end{proof}

Theorem \ref{thm:NE-duality} reformulates the problem of finding Nash equilibria for the multiplicative relative performance game in a complete market into the problem of inverting the map $F=(F^1,...,F^N):\R_+^N\to\R_+^N$ given by
\begin{equation}\label{eq:def-F}
F^i(x^1,...,x^N)=U'_i\left(\frac{x^i}{\left(\prod_{j\neq i}x^j\right)^{\frac{\lambda_i}{N-1}}}\right)\frac{1}{\left(\prod_{j\neq i}x^j\right)^{\frac{\lambda_i}{N-1}}},\quad i\in\mathcal{N}.
\end{equation}
Indeed, under the assumptions of Theorem \ref{thm:NE-duality} and up to integrability conditions,  $(X^{1,*}_T,...,X^{N,*}_T)$ is a Nash equilibrium for \emph{some} initial wealth vector if and only if there exists a vector $(C^1,...,C^N)\in\R_+^N$ such that
$$F(X^{1,*}_T,...,X^{N,*}_T)=(C^1Z,...,C^NZ).$$
It turns out that the map $F$ is invertible if we assume that each player $i$'s relative risk aversion is greater than a certain constant depending only on $\lambda_i$ and $N$.
Before proving this statement, let us transform the map $F$ into another, more digestible, one. We define a map $G=(G^1,...,G^N):\R^N\to\R^N$ as follows:
\begin{equation}\label{eq:G-from-F}
G^i(y^1,...,y^N)=\ln F^i\left(\exp{y^1},...,\exp{y^N}\right),\quad i\in\mathcal{N}.
\end{equation}
Then $F$ can be obtained back from $G$ via the relation $F^i(x^1,...,x^N)=\exp{G^i(\ln{x^1},...,\ln{x^N})}$, $i\in\mathcal{N}$,
and \textit{$F:\R_+^N\to\R_+^N$ is invertible if and only if $G:\R^N\to\R^N$ is invertible}.
Next, we define a transform of the derivative of utility function $U_i$, $i\in\mathcal{N}$:
\begin{equation}\label{eq:V-from-U}
V_i(y)=\ln{U'_i(\exp{y})}.
\end{equation} 
We recall that if a utility function $U$ is $C^2$ (twice continuously differentiable) on $(0,\infty)$ then the \textit{relative risk aversion} of $U$ is defined by $RRA[U](x)=\frac{-xU''(x)}{U'(x)}$ for $x>0$.
If $U_i:(0,\infty)\to\R$ is an increasing, strictly concave, $C^2$ utility function satisfying the Inada conditions then $V_i:\R\to\R$ is bijective, strictly decreasing, and continuously differentiable. Moreover, since 
$$-V'_i(y)=\frac{-U''_i(\exp{y})\exp{y}}{U'_i(\exp{y})}=RRA[U_i](\exp{y}),\quad y\in\R,$$
uniform upper/lower bounds on the relative risk aversion of $U_i$ are equivalent to uniform upper/lower bounds on derivative of the strictly increasing function $-V_i$.
For brevity, let us denote $\mu_i=\frac{\lambda_i}{N-1}\in[0,\frac{1}{N-1}]$. Now \eqref{eq:def-F}-\eqref{eq:V-from-U} yield that $G$ can be written in terms of $V_i$'s as follows:
\begin{equation}\label{eq:express-G}
G^i(y^1,...,y^N)=V_i\left(y^i-\sum_{j\neq i}\mu_iy^j\right)-\sum_{j\neq i}\mu_iy^j,\quad  i\in\mathcal{N}.
\end{equation}

In the next proposition we examine how subsequently adding assumptions on the relative risk aversions of players $i\in\mathcal{N}$ implies invertibility of $G$, as well as continuous differentiability and globally Lipschitz property of $G^{-1}:\R^N\to\R^N$.

\begin{prop}\label{prop:G-invertible}
Suppose that, for every $i\in\mathcal{N}$, $U_i:(0,\infty)\to\R$ is an increasing, strictly concave, $C^2$ utility function satisfying the Inada conditions $U_i'(0)=\infty$ and $U_i'(\infty)=0$. Suppose further that for every $i\in\mathcal{N}$ there exists an $\varepsilon_i>0$ such that $RRA[U_i](x)\geq \varepsilon_i+\frac{\mu_i}{1+\mu_i}$ for all $x>0$, where $\mu_i=\frac{\lambda_i}{N-1}\in[0,\frac{1}{N-1}]$. Then the map $G:\R^N\to\R^N$ defined by \eqref{eq:V-from-U} and \eqref{eq:express-G} is continuously differentiable and invertible, and hence the map $F:\R^N_+\to\R^N_+$ defined by \eqref{eq:def-F} is invertible.

Assume further that for some $i\in\mathcal{N}$ either $\lambda_i<1$, or there exists an $R_i\in(0,\infty)$ such that $RRA[U_i](x)\leq R_i$ for all $x>0$. Then the inverse $G^{-1}$ is continuously differentiable and globally Lipschitz from $\R^N$ to $\R^N$.
\end{prop}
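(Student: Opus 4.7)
The plan is to verify the hypotheses of the global inverse function theorem (Theorem~\ref{thm:global-inverse}) cited in the introduction: $G$ is $C^1$, has non-degenerate Jacobian everywhere, and is proper. The $C^1$ claim is immediate from the $C^2$ assumption on each $U_i$, which makes $V_i = \ln U_i'(\exp\cdot)$ a $C^1$ bijection of $\R$ with $V_i'(s) = -a_i(s)$, where $a_i(s) := RRA[U_i](\exp s) \geq \varepsilon_i + \mu_i/(1+\mu_i)$. Differentiating \eqref{eq:express-G} yields $\partial_i G^i = -a_i$ and $\partial_k G^i = \mu_i(a_i - 1)$ for $k\neq i$, so the Jacobian takes the rank-one-plus-diagonal form $J = c\mathbf{1}^T - D$ with $D = \mathrm{diag}(d_1,\dots,d_N)$, $d_i = a_i(1+\mu_i) - \mu_i \geq \varepsilon_i(1+\mu_i) > 0$, and $c_i = \mu_i(a_i - 1)$. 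The matrix determinant lemma gives $\det J = (-1)^N \prod_i d_i \cdot (1 - \sum_k c_k/d_k)$, and the algebraic identity $c_k/d_k = \mu_k/(1+\mu_k) - \mu_k/((1+\mu_k)d_k)$ combined with $\sum_k \mu_k/(1+\mu_k) \leq 1$ (from $\mu_k \leq 1/(N-1)$) yields $\sum_k c_k/d_k < 1$ strictly, so $\det J \neq 0$ pointwise.

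The main obstacle is properness, which I would handle by establishing that $G$ is a bijection directly and then invoking the inverse function theorem. Introducing $u_i = \sum_{j\neq i} y^j = \bar y - y^i$ so that $s_i = \bar y - (1+\mu_i)u_i$, the system $G(y) = w$ rearranges into $\psi_i(u_i) = \bar y$ for every $i$, where $\psi_i(u) := (1+\mu_i)u + V_i^{-1}(w^i + \mu_i u)$. A short calculation shows $\psi_i'(u) = 1 + \mu_i - \mu_i/a_i \geq \kappa_i > 0$ using the RRA lower bound, so each $\psi_i : \R \to \R$ is a $C^1$ bijection. The consistency condition $\bar y = \sum_j u_j/(N-1)$ then collapses to the single scalar equation $\Psi(\bar y) := \sum_j \psi_j^{-1}(\bar y) - (N-1)\bar y = 0$, and $\Psi'(\bar y) = \sum_j 1/\psi_j'(u_j) - (N-1) > 0$ makes $\Psi$ strictly increasing. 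When $\sum_j \mu_j/(1+\mu_j) < 1$, $\Psi'$ is bounded below by a positive constant and $\Psi(\pm\infty) = \pm\infty$ follows from linear growth. The delicate boundary case is $\lambda_j = 1$ for all $j$, where $\sum_j 1/(1+\mu_j) = N-1$ exactly and the linear-growth argument degenerates; here a direct computation from $u_j = (\bar y - s_j)/(1+\mu_j)$ yields $\Psi(\bar y) = -\tfrac{N-1}{N}\sum_j s_j(\bar y)$ with $s_j(\bar y) = V_j^{-1}(w^j + \mu_j u_j(\bar y))$ strictly decreasing in $\bar y$ and limits $\mp\infty$, so again $\Psi(\pm\infty) = \pm\infty$. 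Hence $\Psi$ has a unique zero, $G$ is a bijection, and the inverse function theorem (via non-degeneracy of $J$) then gives $G^{-1} \in C^1$.

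For the Lipschitz property, Sherman--Morrison yields $(J^{-1})_{ij} = -\delta_{ij}/d_i - c_i/(d_i d_j (1 - \sum_k c_k/d_k))$. The factors $1/d_i$ and $c_i/d_i$ are bounded in absolute value by constants depending only on $\varepsilon_i$ and $\mu_i$. The only factor that can blow up in $y$ is $1/(1 - \sum_k c_k/d_k)$, precisely when all $\lambda_j = 1$ and some $a_j \to \infty$. The additional hypothesis excludes this: if $\lambda_i < 1$ for some $i$, then $\sum_k \mu_k/(1+\mu_k) < 1$ supplies a constant positive lower bound for $1 - \sum_k c_k/d_k$; if $RRA[U_i] \leq R_i$ for some $i$, then $d_i$ is bounded above, and from the identity $1 - \sum_k c_k/d_k = (1 - \sum_k \mu_k/(1+\mu_k)) + \sum_k \mu_k/((1+\mu_k)d_k)$ the term $\mu_i/((1+\mu_i)d_i)$ delivers a uniform positive lower bound. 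Either way $\|J(y)^{-1}\|$ is bounded uniformly in $y$, so $G^{-1}$ is globally Lipschitz by the mean value theorem.
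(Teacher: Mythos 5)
Your proof is correct, and its two main ingredients --- reduction of $G(y)=w$ to a single monotone scalar equation in $\sum_j y^j$, and exploitation of the diagonal-plus-rank-one structure of $JG$ --- are exactly the paper's; the differences are in execution. For bijectivity, the paper first composes $V_i$ with the linear shift, defining $W_i=\bigl(V_i+\tfrac{\mu_i}{1+\mu_i}\,\cdot\,\bigr)^{-1}$, and arrives at the scalar equation $\sum_i\tfrac{1}{1+\mu_i}W_i\bigl(z^i+\tfrac{\mu_i}{1+\mu_i}s\bigr)=\bigl(1-\sum_i\tfrac{\mu_i}{1+\mu_i}\bigr)s$, whose left side is strictly decreasing and surjective while the right side is non-decreasing linear, so existence and uniqueness of the root follow at once with no case distinction; your parametrization through $\psi_i$ pushes the degeneracy into $\Psi'$ and forces the separate treatment of the boundary case $\lambda_j\equiv 1$, which you do handle correctly via $\Psi=-\tfrac{N-1}{N}\sum_j s_j$. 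For the inverse Jacobian, the paper solves $Mu_k=e_k$ by hand in Lemma~\ref{lem:inverse-JG}, reducing again to a scalar equation for $\sum_i u_{ik}$; your Sherman--Morrison formula is the closed form of that same computation (and arguably cleaner), and your identity $c_k/d_k=\tfrac{\mu_k}{1+\mu_k}-\tfrac{\mu_k}{(1+\mu_k)d_k}$ is literally the paper's decomposition of $\tfrac{(v_i+1)\mu_i/(1+\mu_i)}{v_i+\mu_i/(1+\mu_i)}$. One small point of hygiene: in the second branch of your Lipschitz argument the term $\mu_i/((1+\mu_i)d_i)$ is bounded below by a positive constant only if $\mu_i>0$; this is harmless, since $\mu_i=0$ means $\lambda_i=0<1$ and the first branch applies, but the case split should be stated as ``$\lambda_i<1$'' versus ``$\lambda_i=1$ and $RRA[U_i]\leq R_i$'' to make the two branches genuinely exhaustive.
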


\begin{proof}
See Appendix \ref{appendix:G-inv}.
\end{proof}

In view of Proposition~\ref{prop:G-invertible}, we upgrade Assumption \ref{ass:utility} on utility functions of the players to the following assumption, under which $G$ is continuously differentiable and invertible and $G^{-1}:\R^N\to\R^N$ is continuously differentiable and globally Lipschitz.
\begin{assume}\label{ass:utility2}
For each $i\in\mathcal{N}$, the utility function $U_i:(0,\infty)\to\R$ of agent $i$ is increasing, strictly concave, twice continuously differentiable, and satisfies the Inada conditions $U_i'(0)=\infty$ and $U_i'(\infty)=0$. With $\mu_i=\frac{\lambda_i}{N-1}\in[0,\frac{1}{N-1}]$, there exist constants $\varepsilon_i\in(0,\infty)$, $i\in\mathcal{N}$, such that 
\begin{equation}\label{eq:RRA-bound}
RRA[U_i](x)\geq \varepsilon_i+\frac{\mu_i}{1+\mu_i},\quad x>0,
\end{equation}
and for some $i\in\mathcal{N}$ either $\lambda_i<1$, or there exists an $R_i\in(0,\infty)$ such that $RRA[U_i](x)\leq R_i$ for all $x>0$.
\end{assume}
\begin{rem}\label{rem:change-of-var}
Note that $\frac{\mu_i}{1+\mu_i}\downarrow 0$ as $N\to\infty$. Furthermore, the optimization problem \eqref{eq:util-max} can be equivalently restated as
\begin{equation}\label{eq:change-of-var}
\sup_{X^{i}\in\mathcal{X}(x^i_0)}\Ex\left[\bar{U}_i\left(\frac{X^i_T}{(\bar{X}_T)^{\bar\lambda_i}}\right)\right],\quad i\in\mathcal{N},
\end{equation}
where $\bar{X}_T=\left(\prod_{j=1}^NX^j\right)^{\frac{1}{N}}$ is the geometric average of \emph{all} players' terminal wealth, $\bar{\lambda_i}=\frac{\lambda_iN}{N-1+\lambda_i}=\frac{\mu_iN}{1+\mu_i}\in[0,1]$, and $\bar{U}_i(x)=U_i(x^{1+\mu_i})$. This equivalent formulation of the problem is chosen in \cite{lacker_mean_2019} since it is better suited for passing to the mean-field limit. It can be checked that
$$RRA[\bar{U}_i](x)=-\mu_i+(1+\mu_i)RRA[U_i](x^{1+\mu_i}),\quad x>0,$$
hence the condition \eqref{eq:RRA-bound} is equivalent to $RRA[\bar{U}_i]$ being uniformly bounded away from zero (cf. Remark~3.3 in \cite{lacker_mean_2019}). Despite this simplification in the interpretation of \eqref{eq:RRA-bound} after the change of variables, we consider formulation \eqref{eq:util-max} rather than \eqref{eq:change-of-var} because it is arguably more natural for an agent in the $N$-player game to compare his performance with an \emph{average of others' performances} rather than with an \emph{average of all players' performances including himself}.
\end{rem}

To simplify the notation, we denote by the expectation, natural logarithm, or exponential function applied to a vector of (random) variables the vector of, respectively, expectations, natural logarithms, or exponential functions of the corresponding components, i.e.,
\begin{equation}\label{eq:op-notation}
\texttt{op}\ (H^1,...,H^N)=\left(\texttt{op}\ H^1,...,\texttt{op}\ H^N\right),
\end{equation}
where $\texttt{op}=\Ex,\ln,$ or $\exp$.
Theorem \ref{thm:NE-duality} and Proposition \ref{prop:G-invertible} imply that any vector $C=(C^1,...,C^N)\in\R^N_+$ produces a Nash equilibrium for \eqref{eq:util-max} given by
\begin{equation*}
\begin{aligned}
(X^{1,*}_T,...,X^{N,*}_T)&=F^{-1}(C^1Z,...,C^NZ)=\exp G^{-1}(\ln{C}+\ln{Z}\cdot\mathbbm{1})\quad \text{with}\\
(x_0^1,...,x_0^N)&=\Ex^\Q[(X^{1,*}_T,...,X^{N,*}_T)]=\Ex^\Q[\exp G^{-1}(\ln{C}+\ln{Z}\cdot\mathbbm{1})].
\end{aligned}
\end{equation*}
Similarly to passing from $F$ to $G$, we make a change of variable $D=\ln{C}\in\R^N$ and define two maps $\mathcal{H}:\R^N\mapsto(L^0_+(\Omega,\mathcal{F},\PP))^N$ and $h:\R^N\to\R^N$ as follows:
\begin{equation}\label{eq:def-of-hH}
\mathcal{H}(D)=\exp G^{-1}(D+\ln{Z}\cdot\mathbbm{1})\quad \text{and}\quad h(D)=\ln\Ex^\Q[\mathcal{H}(D)],
\end{equation}
so that each $D\in\R^N$ produces a Nash equilibrium given by
$$(X^{1,*}_T,...,X^{N,*}_T)=\mathcal{H}(D)\quad\text{with}\quad (\ln{x_0^1},...,\ln{x_0^N})=h(D).$$
Thus, in order to answer the question of existence and uniqueness of Nash equilibria for a \textit{given} initial wealth vector $(x_0^1,...,x_0^N)$, it is necessary to study the invertibility of the map $h:\R^N\to\R^N$.

For subsequent results, we further upgrade Assumption \ref{ass:mart-meas} by imposing a finiteness condition on the moments of the relative density of the unique ELMM. This condition together with the globally Lipschitz property of $G^{-1}$ are the two main technical assumptions in the proofs of the properties of $\mathcal{H}$ and $h$ that follow in the rest of the paper.
\begin{assume}\label{ass:mart-meas2}
The market is complete, $\mathcal{M}=\{\Q\}$ is a singleton. The relative density $Z=\frac{d\Q}{d\PP}$ has finite moments of all orders: $\Ex[Z^r]<\infty$ for all $r\in\R$.
\end{assume}
\begin{exmp}[Ito process market]
Let $(W_t)_{t\in[0,T]}$ be a $d$-dimensional Brownian motion on $(\Omega, \mathcal{F}, \PP)$. Assume that $(\mathcal{F}_t)_{t\in[0,T]}$ is the augmented natural filtration of $W$ and that the price of the $i$-th stock, for $i=1,...,d$, evolves according to the equation
\begin{equation*}
dS^i_t=S^i_t(b^i_tdt+\sigma^i_t\cdot dW_t),\quad S^i_0>0,\quad t\in[0,T],
\end{equation*}
where $b^i$ and $\sigma^i$ are predictable, uniformly bounded $\R$-valued and $\R^{d\times 1}$-valued processes, respectively. We assume that the volatility matrix $\sigma=(\sigma^i)_{i=1,...,d}\in\R^{d\times d}$ is invertible and that $\sigma^{-1}$ is a uniformly bounded process as well. In this case, the predictable $\R^{d\times 1}$-valued process
$\theta=(\sigma^{-1})^{T}b$ is also uniformly bounded. We denote $B_t=W_t+\int_0^t\theta_sds$. By Girsanov's theorem, $B$ is a Brownian motion under the probability measure $\Q$ equivalent to $\PP$ with relative density given by
\begin{equation}\label{eq:mart-meas}
Z=\exp\left(-\int_0^T\theta_s\cdot dW_s-\frac{1}{2}\int_0^T\norm{\theta_s}^2ds\right)=\mathcal{E}\left(-\int\theta_s\cdot dW_s \right)_T
%&=\exp\left(-\int_0^T\theta_sdB_s+\frac{1}{2}\int_0^T\norm{\theta_s}^2ds\right)
\end{equation}
and $dS^i_t=S^i_t\sigma^i_t\cdot dB_t$, $i=1,...,d$, so $\Q$ is an ELMM for $S$. It can be checked that $\Q$ is the unique ELMM for $S$ via Girsanov's theorem and that all moments of $Z$ are finite due to uniform boundedness of $\theta$ and Novikov's condition.
\end{exmp}

We can now restate Theorem \ref{thm:NE-duality} in terms of invertibility of $h$.

\begin{thm}\label{thm:NE}
Suppose that Assumptions \ref{ass:utility2} and \ref{ass:mart-meas2} hold. Let $(x_0^1,...,x_0^N)\in\R^N_+$ and let $(X^{1,*}_T,...,X^{N,*}_T)\in\mathcal{X}(x^1_0)\times...\times\mathcal{X}(x^N_0)$ be such that $X^{i,*}_T>0$ for all $i\in\mathcal{N}$. Then, the following are equivalent:
\begin{enumerate}
\item[1)] $(X^{1,*}_T,...,X^{N,*}_T)$ is a Nash equilibrium in $\mathcal{X}(x^1_0)\times...\times\mathcal{X}(x^N_0)$;
\item[2)] there exists $D\in h^{-1}(\ln{x_0^1},...,\ln{x_0^N})$ such that $(X^{1,*}_T,...,X^{N,*}_T)=\mathcal{H}(D)$.
\end{enumerate}
Thus, all Nash equilibria for \eqref{eq:util-max} with an initial wealth vector $(x^1_0,...,x^N_0)\in\R^N_+$ are given by $(X^{1,*}_{T},...,X^{N,*}_{T})=\mathcal{H}(D)$, where $D\in h^{-1}(\ln{x^1_0},...,\ln{x^N_0})$. If $(\ln{x_0^1},...,\ln{x_0^N})$ does not belong to the range of function $h$ then there is no Nash equilibrium for the initial wealth vector $(x_0^1,...,x_0^N)$.
\end{thm}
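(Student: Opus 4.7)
The strategy is to combine the characterization in Theorem~\ref{thm:NE-duality} with the invertibility of $G$ (and hence of $F$) from Proposition~\ref{prop:G-invertible}, reading the pointwise dual relation through the transformations \eqref{eq:def-F}--\eqref{eq:V-from-U} and the budget constraint through the definition $h(D)=\ln\Ex^\Q[\mathcal{H}(D)]$.

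For the forward implication $1)\Rightarrow 2)$, I would apply Theorem~\ref{thm:NE-duality} to obtain $(C^1,\ldots,C^N)\in\R_+^N$ with $U_i'(X^{i,*}_T/(\bar{X}^{i,*}_T)^{\lambda_i})=C^iZ(\bar{X}^{i,*}_T)^{\lambda_i}$ and $\Ex^\Q[X^{i,*}_T]=x_0^i$ for every $i\in\mathcal{N}$. Dividing the pointwise equation by $(\bar{X}^{i,*}_T)^{\lambda_i}$ and recognising the definition \eqref{eq:def-F} of $F$ yields $F(X^{1,*}_T,\ldots,X^{N,*}_T)=(C^1Z,\ldots,C^NZ)$, $\PP$-a.s. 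By Proposition~\ref{prop:G-invertible}, $F$ is invertible with $F^{-1}=\exp\circ G^{-1}\circ\ln$ componentwise, so setting $D^i:=\ln C^i$ gives $(X^{1,*}_T,\ldots,X^{N,*}_T)=\mathcal{H}(D)$, while the budget identities translate into $h(D)=(\ln x_0^1,\ldots,\ln x_0^N)$, i.e.\ $D\in h^{-1}(\ln x_0^1,\ldots,\ln x_0^N)$.

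For the converse $2)\Rightarrow 1)$, given $D\in h^{-1}(\ln x_0^1,\ldots,\ln x_0^N)$ I would set $C^i:=\exp D^i$ and $X^{i,*}_T:=\mathcal{H}(D)^i$. By construction, $F(X^{1,*}_T,\ldots,X^{N,*}_T)=(C^1Z,\ldots,C^NZ)$, which is the pointwise relation in Theorem~\ref{thm:NE-duality}(2). The equality $h(D)=(\ln x_0^i)_i$ gives $\Ex^\Q[X^{i,*}_T]=x_0^i$, and market completeness (Proposition~\ref{prop:primal-dom}) then produces an $X^{i,*}\in\tilde{\mathcal{X}}(x_0^i)$ whose terminal value is $X^{i,*}_T$. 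Together with the finiteness $\Ex[\tilde U_i(C^iZ(\bar{X}^{i,*}_T)^{\lambda_i})]<\infty$, Theorem~\ref{thm:NE-duality} then delivers the Nash equilibrium property.

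The main obstacle is the verification of the integrability conditions underlying both directions: that $\Ex^\Q[\mathcal{H}(D)^i]<\infty$ for every $D\in\R^N$, which is what makes $h$ actually $\R^N$-valued in the first place, and that $\Ex[\tilde U_i(C^iZ(\bar{X}^{i,*}_T)^{\lambda_i})]<\infty$ in the converse direction. Both should follow from the globally Lipschitz property of $G^{-1}$ in Proposition~\ref{prop:G-invertible}, which bounds $\ln\mathcal{H}(D)^i=(G^{-1})^i(D+\ln Z\cdot\mathbbm{1})$ linearly in $|\ln Z|$, combined with the moment condition in Assumption~\ref{ass:mart-meas2} ensuring that $Z$ has finite moments of every order. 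For the dual finiteness I would exploit the identity $\tilde U_i(C^iZ(\bar{X}^{i,*}_T)^{\lambda_i})=U_i(X^{i,*}_T/(\bar{X}^{i,*}_T)^{\lambda_i})-C^iZX^{i,*}_T$ together with the growth bounds on $U_i$ implied by the uniform lower bound on $RRA[U_i]$ in Assumption~\ref{ass:utility2}, reducing the estimate to the same type of moment bound on $Z$ and the components of $\mathcal{H}(D)$.
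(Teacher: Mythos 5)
Your proposal is correct and follows essentially the same route as the paper: apply Theorem~\ref{thm:NE-duality}, translate the pointwise relation into $F(X^{1,*}_T,\ldots,X^{N,*}_T)=(C^1Z,\ldots,C^NZ)$, invert via Proposition~\ref{prop:G-invertible}, and set $D=\ln C$ so that the budget constraints become $h(D)=\ln(x_0^1,\ldots,x_0^N)$. The only cosmetic difference is in the dual integrability check, which the paper isolates as Lemma~\ref{lem:integrability-satisfied} and proves via a polynomial growth bound $\tilde U_i(y)<K_1y^{-\delta}+K_2$ derived from the relative risk aversion lower bound, whereas you propose the conjugacy identity at the optimum plus primal growth bounds; both reduce to the same combination of the globally Lipschitz property of $G^{-1}$ and the moment condition on $Z$.
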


\begin{proof}
If $(X^{1,*}_T,...,X^{N,*}_T)\in\mathcal{X}(x^1_0)\times...\times\mathcal{X}(x^N_0)$ is a Nash equilibrium then by Theorem \ref{thm:NE-duality} and by definition of function $F$ there exists $(C^1,...,C^N)\in\R_+^N$ such that $F(X^{1,*}_T,...,X^{N,*}_T)=(C^1Z,...,C^NZ)$. Since $x_0^i=\Ex^\Q[X^{i,*}_T]$ for every $i\in\mathcal{N}$ by the optimality of $X^{i,*}_T\in\mathcal{X}(x^i_0)$ and since $G$ (hence $F$) is invertible under Assumption \ref{ass:utility2}, we have
$$(x_0^1,...,x_0^N)=\Ex^\Q\left[F^{-1}(C^1Z,...,C^NZ)\right]=\Ex^\Q\left[\exp{G^{-1}(\ln{C}+\ln{Z}\cdot\mathbbm{1})}\right]\overset{\eqref{eq:def-of-hH}}{=}\exp h(\ln{C}).$$
By taking $D=\ln{C}$, the implication $1)\Rightarrow 2)$ is proved. Conversely, if there exists $D\in h^{-1}(\ln{x_0^1},...,\ln{x_0^N})$ such that $(X^{1,*}_T,...,X^{N,*}_T)=\mathcal{H}(D)$ then, taking $C=\exp{D}$ and using the definitions of $h$ and $\mathcal{H}$ and Lemma~\ref{lem:integrability-satisfied}, we obtain the equivalent statement 2) in Theorem \ref{thm:NE-duality}.
\end{proof}

\begin{lem}\label{lem:integrability-satisfied}
Under Assumptions \ref{ass:utility2} and \ref{ass:mart-meas2}, for every $D\in\R^N$ and $(X^{1,*}_T,...,X^{N,*}_T)=\mathcal{H}(D)$ the integrability condition $\Ex\left[\tilde{U}_i\left(\exp(D^i)Z(\bar{X}^{i,*}_T)^{\lambda_i}\right)\right]<\infty$ is satisfied for all $i\in\mathcal{N}$.
\end{lem}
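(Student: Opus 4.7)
The plan is to bound the positive part of $\tilde U_i(\exp(D^i) Z (\bar X^{i,*}_T)^{\lambda_i})$, which is what $\Ex[\,\cdot\,]<\infty$ requires under the convention~\eqref{eq:convention}, using two ingredients: the global Lipschitzness of $G^{-1}$ supplied by Proposition~\ref{prop:G-invertible} and the polynomial moments of $Z$ postulated in Assumption~\ref{ass:mart-meas2}.

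First I would perform a Fenchel reduction. By the construction of $G$ in \eqref{eq:express-G} and of $\mathcal H$ in \eqref{eq:def-of-hH}, with $C^i := \exp D^i$, the vector $(X^{1,*}_T,\ldots,X^{N,*}_T) = \mathcal H(D)$ satisfies the pointwise first-order relation $U'_i\bigl(X^{i,*}_T / (\bar X^{i,*}_T)^{\lambda_i}\bigr) = C^i Z (\bar X^{i,*}_T)^{\lambda_i}$ $\PP$-almost surely. Combining this with the Fenchel identity $\tilde U_i(U'_i(x)) = U_i(x) - x U'_i(x)$ yields
$$\tilde U_i\bigl(C^i Z (\bar X^{i,*}_T)^{\lambda_i}\bigr) = U_i\!\left(\frac{X^{i,*}_T}{(\bar X^{i,*}_T)^{\lambda_i}}\right) - C^i Z X^{i,*}_T,$$
and, since the subtracted term is non-negative, $\tilde U_i(\cdot)^+ \leq U_i^+\bigl(X^{i,*}_T / (\bar X^{i,*}_T)^{\lambda_i}\bigr)$. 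It therefore suffices to show that the right-hand side has finite expectation.

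The next step is to establish arbitrary polynomial moments for each $X^{j,*}_T$. Since $G^{-1}$ is globally Lipschitz with some constant $L$, each coordinate satisfies $|\ln X^{j,*}_T| = |G^{-1}_j(D + \ln Z \cdot \mathbbm 1)| \leq A + B|\ln Z|$ with constants $A, B$ depending only on $D$, $L$ and $G^{-1}(0)$. Exponentiating gives $(X^{j,*}_T)^r \leq e^{|r| A}\bigl(Z^{|r| B} + Z^{-|r| B}\bigr)$ for every $r \in \R$, and Assumption~\ref{ass:mart-meas2} then implies $\Ex[(X^{j,*}_T)^r] < \infty$ for every $r \in \R$ and $j \in \mathcal N$. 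H\"older's inequality extends this to $\Ex\bigl[\prod_{j \in \mathcal N} (X^{j,*}_T)^{r_j}\bigr] < \infty$ for any real exponents $r_j$; in particular, $\Ex[Z X^{i,*}_T] < \infty$, so the subtracted term in the Fenchel identity is itself integrable.

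Finally I would use \eqref{eq:RRA-bound} to extract polynomial growth of $U_i^+$. Writing $\eta_i := \varepsilon_i + \mu_i/(1+\mu_i) > 0$, the differential inequality $-(\ln U'_i)'(x) \geq \eta_i / x$ integrates on $[1, x]$ to $U'_i(x) \leq U'_i(1) x^{-\eta_i}$, and one further integration produces a bound $U_i^+(x) \leq c_i(1 + x^{p_i})$ for some $p_i \geq 0$ (depending on whether $\eta_i$ is less than, equal to, or greater than $1$). Substituting $x = X^{i,*}_T / (\bar X^{i,*}_T)^{\lambda_i}$ and invoking the moment bound from the previous paragraph then completes the argument. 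The work is essentially bookkeeping; the key point is that the Lipschitz bound on $G^{-1}$ converts all polynomial moments of $Z$ into all polynomial moments of each $X^{j,*}_T$, which comfortably absorbs the polynomial growth of $U_i^+$ allowed by the RRA lower bound.
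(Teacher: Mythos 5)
Your proof is correct, but it takes a genuinely different route from the paper's. The paper bounds the conjugate function directly: it first disposes of the case $U_i(\infty)\leq 0$ via $\tilde U_i\leq U_i(\infty)$, then uses the RRA lower bound \eqref{eq:RRA-bound} to derive the elasticity-type inequality $U_i'(ax)<bU_i'(x)$ and invokes Lemmas~6.3(iii) and 6.5 of \cite{kramkov_asymptotic_1999} to obtain the dual-side growth estimate $\tilde U_i(y)<K_1y^{-\delta}+K_2$ for all $y>0$; it then controls $\ln\bigl(\exp(D^i)Z(\bar X^{i,*}_T)^{\lambda_i}\bigr)$ by $M|\ln Z|+M_0$ via the Lipschitz property of $G^{-1}$ and finishes with the moment assumption on $Z$. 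You instead exploit the pointwise first-order condition built into $\mathcal H(D)$ to apply the Fenchel identity at the optimizer, reducing the problem to $\Ex[U_i^+(X^{i,*}_T/(\bar X^{i,*}_T)^{\lambda_i})]<\infty$, and then use the \emph{primal-side} polynomial growth $U_i^+(x)\leq c_i(1+x^{p_i})$ obtained by integrating the RRA bound, together with all polynomial moments of the $X^{j,*}_T$ (which follow from the same two inputs: Lipschitz $G^{-1}$ and moments of $Z$). Your version is more self-contained---it avoids the asymptotic-elasticity lemmas of \cite{kramkov_asymptotic_1999}---at the cost of using the first-order relation $U_i'(X^{i,*}_T/(\bar X^{i,*}_T)^{\lambda_i})=C^iZ(\bar X^{i,*}_T)^{\lambda_i}$, which the paper's estimate does not need (the paper's bound would apply to $\tilde U_i$ evaluated at any random variable whose logarithm has all exponential moments). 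Both arguments rest on the same two structural facts, and both correctly reduce the claim to finiteness of the positive part, which suffices for $\Ex[\cdot]<\infty$ under convention \eqref{eq:convention}.
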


\begin{proof}
Fix $D=(D^1,...,D^N)\in\R^N$ and $i\in\mathcal{N}$. It follows from conjugacy relations between $U_i$ and $\tilde{U}_i$ that $\tilde{U}_i(y)\leq U_i(\infty)=\lim_{x\to\infty}U_i(x)$ for all $y>0$. In particular, if $U_i(\infty)\leq 0$ then the integrability condition is clearly satisfied, hence we can further assume that $U_i(\infty)> 0$. Proceeding as in the proof of Lemma~3 in \cite{kramkov_two-times_2006}, Assumption~\ref{ass:utility2} implies that for $r_i=\varepsilon_i+\frac{\mu_i}{1+\mu_i}$ and for any $a>1$ we have
$$U_i'(x)-U_i'(ax)=\int_1^a -xU_i''(tx)dt\geq r_i\int_1^a \frac{U_i'(tx)}{t}dt>r_iU_i'(ax)\ln{a},\quad z>0.$$
Hence, the inequality $U_i'(ax)<bU_i'(x)$ holds for all $x>0$, $a>1$, $b=\frac{1}{1+r_i\ln{a}}$. By Lemma~6.5 and Lemma~6.3(iii) in \cite{kramkov_asymptotic_1999}, this property implies a bound on the growth of the conjugate function $\tilde{U}_i$ near $0$: there exist $y_0>0$ and $\delta>0$ such that $\tilde{U}_i(y)<(y/y_0)^{-\delta}\tilde{U}_i(y_0)$ for all $0<y<y_0$. As a consequence, there exist constants $K_1,K_2>0$ such that
\begin{equation}\label{eq:growth-of-conjugate}
\tilde{U}_i(y)<K_1y^{-\delta}+K_2\quad \text{for all}\quad y>0.
\end{equation}
Since $\ln(X^{1,*}_T,...,X^{N,*}_T)=\ln\mathcal{H}(D)=G^{-1}(D^1+\ln{Z},...,D^N+\ln{Z})$ and the function $G^{-1}$ is globally Lipschitz by Proposition \ref{prop:G-invertible}, there exist constants $M,M_0>0$ such that
\begin{equation}\label{eq:inside-conjugate-bound}
\Big\vert D^i+\ln{Z}(\omega)+\frac{\lambda_i}{N-1}\sum_{j\neq i}\ln{X}^{i,*}_T(\omega)\Big\vert\leq M\vert \ln Z(\omega)\vert +M_0,\quad \text{for all }\omega\in\Omega.
\end{equation}
Using \eqref{eq:growth-of-conjugate}-\eqref{eq:inside-conjugate-bound}, we obtain
\begin{align*}
\Ex\left[\tilde{U}_i\left(\exp(D^i)Z(\bar{X}^{i,*}_T)^{\lambda_i}\right)\right]&\leq K_1\Ex[\exp(\delta M\vert \ln Z\vert +\delta M_0)]+K_2\\
&=K_1\exp(\delta M_0)\Ex\left[\left(Z\vee \frac{1}{Z}\right)^{\delta M}\right]+K_2,
\end{align*}
which is finite by Assumption \ref{ass:mart-meas2}.
\end{proof}

\begin{lem}\label{lem:continuity}
Let Assumptions \ref{ass:utility2} and \ref{ass:mart-meas2} hold. The map $\mathcal{H}:\R^N\to (L^0_+(\Omega,\mathcal{F},\PP))^N$ defined by \eqref{eq:def-of-hH} is continuous with respect to almost sure convergence and continuous as a map from $\R^N$ to $(L^p(\Omega,\mathcal{F},\PP))^N$ and to $(L^p(\Omega,\mathcal{F},\Q))^N$ for every $p>0$. In particular, $h(D)=\ln\Ex^\Q[\mathcal{H}(D)]$ is finite for every $D\in\R^N$.
\end{lem}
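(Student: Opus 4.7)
The plan is to exploit two inputs already in hand: the global Lipschitz property of $G^{-1}$ (from Proposition~\ref{prop:G-invertible}) and the fact that $Z$ and $1/Z$ have finite moments of every order (Assumption~\ref{ass:mart-meas2}). Almost sure continuity is essentially automatic; the $L^p$ continuity and the finiteness of $h$ will follow from a uniform dominated convergence argument using a power of $Z\vee Z^{-1}$ as the majorant.

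First I would handle almost sure continuity. Let $D_n\to D$ in $\R^N$. Since $G^{-1}$ is continuous and $\exp$ is continuous componentwise, for every $\omega\in\Omega$ with $Z(\omega)\in(0,\infty)$ (which holds $\PP$-a.s.) we have $D_n+\ln Z(\omega)\cdot\mathbbm{1}\to D+\ln Z(\omega)\cdot\mathbbm{1}$ in $\R^N$, hence $\mathcal{H}(D_n)(\omega)\to\mathcal{H}(D)(\omega)$. This gives $\mathcal{H}(D_n)\to\mathcal{H}(D)$ $\PP$-a.s., and since $\Q\sim\PP$, also $\Q$-a.s.

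Next, I would derive a uniform pointwise bound. Let $L$ be a Lipschitz constant for $G^{-1}$ and set $K_0=\vert G^{-1}(0)\vert$. Then for every $D\in\R^N$ and every $\omega$,
\begin{equation*}
\bigl\vert G^{-1}(D+\ln Z(\omega)\cdot\mathbbm{1})\bigr\vert \leq K_0 + L\bigl(\vert D\vert + \sqrt{N}\,\vert\ln Z(\omega)\vert\bigr),
\end{equation*}
so componentwise $0<\mathcal{H}^i(D)(\omega)\leq \exp\bigl(K_0+L\vert D\vert\bigr)\cdot\bigl(Z(\omega)\vee Z(\omega)^{-1}\bigr)^{L\sqrt N}$. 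Raising to the $p$-th power and taking expectation under $\PP$ (resp.\ under $\Q$, i.e.\ multiplying by $Z$) yields a finite bound by Assumption~\ref{ass:mart-meas2}, since $(Z\vee Z^{-1})^{pL\sqrt N}\leq Z^{pL\sqrt N}+Z^{-pL\sqrt N}$ and $Z^{1+pL\sqrt N}+Z^{1-pL\sqrt N}$ are both $\PP$-integrable. In particular $\mathbb{E}^\Q[\mathcal{H}^i(D)]\in(0,\infty)$, so $h(D)\in\R^N$.

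To upgrade a.s.\ convergence to $L^p(\PP)$ and $L^p(\Q)$ convergence for arbitrary $p>0$, fix a sequence $D_n\to D$ and choose $M$ with $\sup_n\vert D_n\vert\leq M$. Then the bound above applied with $D=D_n$ gives the common majorant
\begin{equation*}
\mathcal{H}^i(D_n)^p\leq e^{p(K_0+LM)}\,\bigl(Z\vee Z^{-1}\bigr)^{pL\sqrt N},
\end{equation*}
which is in $L^1(\PP)$ and, multiplied by $Z$, in $L^1(\PP)$ as well, hence $\mathcal{H}^i(D_n)^p$ is dominated in $L^1(\Q)$. The dominated convergence theorem under each measure then gives $\mathcal{H}(D_n)\to\mathcal{H}(D)$ in $L^p(\PP)$ and in $L^p(\Q)$.

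The main obstacle is just packaging the two Lipschitz / moment inputs into a single majorant that works on a neighbourhood of every $D$; once that is done everything reduces to dominated convergence, and the final claim about $h$ is the $p=1$, $\Q$-case combined with strict positivity of $\mathcal{H}(D)$.
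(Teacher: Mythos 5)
Your proof is correct and follows essentially the same route as the paper's: almost sure continuity from continuity of $G^{-1}$, then a uniform majorant of the form $\text{const}\cdot\left(Z\vee Z^{-1}\right)^{pL\sqrt{N}}$ obtained from the global Lipschitz property of $G^{-1}$ and the finite moments of $Z$, and finally dominated convergence under both $\PP$ and $\Q$. The only cosmetic difference is that you work with a bounded sequence $D_n\to D$ while the paper dominates uniformly over a ball of radius $1$ around $D$; these are interchangeable.
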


\begin{proof}
Almost sure convergence $\mathcal{H}(D')\to\mathcal{H}(D)$ as $D'\to D$ follows from continuity of $G^{-1}$. Let us fix $p>0$. Since $\ln\mathcal{H}(D)=G^{-1}(D+\ln{Z}\cdot\mathbbm{1})$ and $G^{-1}$ is globally Lipschitz, there exist finite constants $M,M_0>0$ such that
$$\vert \ln\mathcal{H}^i(D)(\omega)\vert\leq M(\vert \ln Z(\omega)\vert+\norm{D}) +M_0,\quad \text{for all } i\in\mathcal{N},\ D\in\R^N,\text{ and } \omega\in\Omega,$$
where $\norm{\cdot}$ is a norm on $\R^N$ and $\mathcal{H}^i$ is the $i$-th coordinate of the map $\mathcal{H}$. Similarly as in the proof of Lemma~\ref{lem:integrability-satisfied}, finite moments of $Z$ imply that $\mathcal{H}^i(D)^p\in L^1(\Omega,\mathcal{F},\PP)$ and $\mathcal{H}^i(D)^p\in L^1(\Omega,\mathcal{F},\Q)$ for all $i\in\mathcal{N}$. Moreover, the same bound implies that, for a fixed $D\in\R^N$ and for all $D'$ in a ball of radius $1$ around $D$, the random variables $\mathcal{H}^i(D')^p(\omega)$ are uniformly bounded by a $\PP$- and $\Q$-integrable random variable, hence $\mathcal{H}^i(D')^p\to\mathcal{H}^i(D)^p$ as $D'\to D$ in $L^1(\Omega,\mathcal{F},\PP)$ and $L^1(\Omega,\mathcal{F},\Q)$ by the dominated convergence theorem.
\begin{comment}
Let $D'\in\R^N$. Since $G^{-1}$ is globally Lipschitz, we have
\begin{align*}
\vert\mathcal{H}^i(D')^p-\mathcal{H}^i(D)^p\vert&=\mathcal{H}^i(D)^p\cdot\big\vert\left(\exp\left\{p[G^{-1}]^i(D'+\ln{Z}\cdot \mathbbm{1})-p[G^{-1}]^i(D+\ln{Z}\cdot \mathbbm{1})\right\}-1\right)\big\vert\\
&\leq \mathcal{H}^i(D)^p \left(\exp\{pM \norm{D'-D}\}-1\right), \quad \text{for all } i\in\mathcal{N}\text{ and } \omega\in\Omega,
\end{align*}
for a finite constant $M>0$ and a norm $\norm{\cdot}$ on $\R^N$. Taking the expectation and sending $\varepsilon$ to zero in $\norm{D'-D}<\varepsilon$ yields $\mathcal{H}^i(D')\to\mathcal{H}^i(D)$ in $L^p$.
\end{comment}
\end{proof}

\begin{prop}\label{prop:h-properties}
Under Assumptions \ref{ass:utility2} and \ref{ass:mart-meas2}, the map $h:\R^N\to\R^N$ is globally Lipschitz and continuously differentiable with partial derivatives at point $D\in\R^N$ given by
\begin{equation}\label{eq:h-Jacobian}
\partial_j h^i(D)=\frac{\Ex^\Q\left[\partial_j [G^{-1}]^i(D+\ln Z\cdot \mathbbm{1})\cdot \mathcal{H}^i(D)\right]}{\Ex^\Q\left[\mathcal{H}^i(D)\right]}
\end{equation}
for $i,j\in\mathcal{N}$, where $h^i$, $[G^{-1}]^i$, and $\mathcal{H}^i$ are the $i$-th coordinates of the corresponding maps.
\end{prop}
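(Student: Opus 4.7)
The plan is to establish the two claims—global Lipschitz property and $C^1$ regularity with the stated formula—separately, with both arguments driven by the globally Lipschitz property of $G^{-1}$ granted by Proposition~\ref{prop:G-invertible} and the integrability afforded by Assumption~\ref{ass:mart-meas2} via Lemma~\ref{lem:continuity}.

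First I would prove the Lipschitz bound. Let $L$ denote a Lipschitz constant for $G^{-1}$ with respect to some chosen norm $\norm{\cdot}$ on $\R^N$. For any $D,D'\in\R^N$ and $i\in\mathcal{N}$,
\[
\bigl\vert\ln\mathcal{H}^i(D')-\ln\mathcal{H}^i(D)\bigr\vert=\bigl\vert[G^{-1}]^i(D'+\ln Z\cdot\mathbbm{1})-[G^{-1}]^i(D+\ln Z\cdot\mathbbm{1})\bigr\vert\leq L\norm{D'-D},
\]
$\PP$-a.s. (hence $\Q$-a.s.), which gives the pointwise sandwich
\[
e^{-L\norm{D'-D}}\mathcal{H}^i(D)\leq \mathcal{H}^i(D')\leq e^{L\norm{D'-D}}\mathcal{H}^i(D).
\]
Taking $\Ex^\Q$ (finite, by Lemma~\ref{lem:continuity}), then the logarithm, yields $\vert h^i(D')-h^i(D)\vert\leq L\norm{D'-D}$, so $h$ is globally Lipschitz.

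Next I would justify differentiation under the expectation. Since $G^{-1}$ is $C^1$ and globally Lipschitz, its partial derivatives $\partial_j[G^{-1}]^i$ are continuous and uniformly bounded by a constant (in fact by $L$ after adjusting the norm). A formal computation gives
\[
\partial_j\mathcal{H}^i(D)=\mathcal{H}^i(D)\cdot \partial_j[G^{-1}]^i(D+\ln Z\cdot\mathbbm{1}),
\]
and the sandwich from the first step shows that for $D'$ in the unit ball around $D$,
\[
\bigl\vert\partial_j\mathcal{H}^i(D')\bigr\vert\leq L\, e^{L}\,\mathcal{H}^i(D),
\]
a $\Q$-integrable dominating function by Lemma~\ref{lem:continuity}. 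The standard differentiation-under-the-integral theorem then gives
\[
\partial_j\Ex^\Q[\mathcal{H}^i(D)]=\Ex^\Q\!\left[\partial_j[G^{-1}]^i(D+\ln Z\cdot\mathbbm{1})\cdot\mathcal{H}^i(D)\right],
\]
and dividing by $\Ex^\Q[\mathcal{H}^i(D)]>0$ (via the chain rule applied to $\ln$) produces formula \eqref{eq:h-Jacobian}.

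Finally I would verify that each $\partial_j h^i$ is continuous in $D$. The denominator $D\mapsto\Ex^\Q[\mathcal{H}^i(D)]$ is continuous by Lemma~\ref{lem:continuity}. For the numerator, as $D'\to D$ we have $\partial_j[G^{-1}]^i(D'+\ln Z\cdot\mathbbm{1})\to\partial_j[G^{-1}]^i(D+\ln Z\cdot\mathbbm{1})$ $\PP$-a.s. by continuity of $\partial_j[G^{-1}]^i$, while the product $\partial_j[G^{-1}]^i(D'+\ln Z\cdot\mathbbm{1})\cdot\mathcal{H}^i(D')$ is dominated by $L\,e^{L}\mathcal{H}^i(D)$ for $\norm{D'-D}\leq 1$; dominated convergence concludes. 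The main technical obstacle is exactly this interchange: ensuring an integrable dominating function uniformly on a neighborhood of $D$, which is delivered cleanly by combining the global Lipschitz property of $G^{-1}$ with the all-orders moment bounds on $Z$ from Assumption~\ref{ass:mart-meas2}.
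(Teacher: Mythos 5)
Your proposal is correct and follows essentially the same route as the paper's proof: the pointwise exponential sandwich from the global Lipschitz property of $G^{-1}$ gives the Lipschitz bound on $h$, and the same bound supplies a $\Q$-integrable dominating function (uniform on a neighborhood of $D$) that justifies differentiating under $\Ex^\Q$ and the continuity of the resulting partial derivatives via dominated convergence. The only cosmetic difference is that you dominate by $L e^{L}\mathcal{H}^i(D)$ where the paper uses $L^i\exp(L_0)\left(Z\vee \frac{1}{Z}\right)^{L^i}$; both are integrable for the same reason.
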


\begin{proof}
Let $i\in\mathcal{N}$ and let $L^i\in\R_+$ be a Lipschitz constant for $[G^{-1}]^i$. Then for all $D_1,D_2\in\R^N$
$$\begin{aligned}
h^i(D_1)-h^i(D_2)&=\ln\frac{\Ex^\Q\left[\mathcal{H}^i(D_1)\right]}{\Ex^\Q\left[\mathcal{H}^i(D_2)\right]}=\ln\frac{\Ex^\Q\left[\mathcal{H}^i(D_2)\exp{\left\{[G^{-1}]^i(D_1+\ln Z\cdot\mathbbm{1})-[G^{-1}]^i(D_2+\ln Z\cdot\mathbbm{1})\right\}}\right]}{\Ex^\Q\left[\mathcal{H}^i(D_2)\right]}\\
&\leq \ln\frac{\Ex^\Q\left[\mathcal{H}^i(D_2)\exp{\left\{L^i\norm{D_1-D_2}\right\}}\right]}{\Ex^\Q\left[\mathcal{H}^i(D_2)\right]}=L^i\norm{D_1-D_2}.\\
\end{aligned}$$
The same is true if $D_1$ and $D_2$ are interchanged, hence $h^i$ is $L^i$-Lipschitz for every $i\in\mathcal{N}$, and so $h$ is globally Lipschitz.

In order to prove continuous differentiability of $h$, we will show that the partial derivatives of $h$ are given by \eqref{eq:h-Jacobian} and that they are continuous in $D$. Indeed, existence and \textit{continuity} of partial derivatives of a function implies that the function is continuously differentiable.

Let us fix $i,j\in\mathcal{N}$ and $D\in\R^N$, and let $e_j\in\R^N$ denote a column vector with all entries equal to zero, except the $j$-th entry which is equal to one. By the chain rule
\begin{equation}\label{eq:h-derivative}
\partial_j \mathcal{H}^i(D+\delta e_j)(\omega)=\partial_j [G^{-1}]^i(D+\delta e_j+\ln Z(\omega)\cdot \mathbbm{1})\cdot \mathcal{H}^i(D+\delta e_j)(\omega)
\end{equation}
for every $\omega\in\Omega$, $\delta\in\R$. The Lipschitz property of $G^{-1}$ implies that for constants $L^i,L_0>0$
$$[G^{-1}]^i(D+\delta e_j+\ln Z(\omega)\cdot\mathbbm{1})\leq L^i\vert \ln Z(\omega)\vert +L_0,\quad \text{for all }\omega\in\Omega\text{ and }\vert\delta\vert\leq 1.$$
Since $\vert\partial_j [G^{-1}]^i(D+\delta e_j+\ln Z(\omega)\cdot \mathbbm{1})\vert\leq L^i$, the random variable in \eqref{eq:h-derivative} is bounded by the $\Q$-integrable random variable $L^i\exp(L_0)\left(Z\vee \frac{1}{Z}\right)^{L^i}$ uniformly in $|\delta|\leq 1$. Hence, the mean value theorem and dominated convergence theorem allow interchanging the limit and expectation in
$$\mathbb{E}\left[\lim_{\delta\to 0}\frac{\mathcal{H}^i(D+\delta e_j)-\mathcal{H}^i(D)}{\delta}\right]$$
to obtain that $\partial_j\Ex^\Q\left[\mathcal{H}^i(D)\right]=\Ex^\Q\left[\partial_j\exp{[G^{-1}]^i(D+\ln Z\cdot \mathbbm{1})}\right]$. By composing the logarithmic function and $D\mapsto\Ex^\Q\left[\mathcal{H}^i(D)\right]$, \eqref{eq:h-Jacobian} is proved. Finally, both the numerator and denominator of the right-hand side of \eqref{eq:h-Jacobian} are continuous as functions of $D\in\R^N$ by dominated convergence, because the functions inside expectations are continuous in $D$ and, similarly as above, are uniformly bounded by $\Q$-integrable random variables in a neighborhood around any $D\in\R^N$. 
\end{proof}

\begin{rem}
Note that if the Jacobian $Jh=(\partial_j h^i)_{ij}$ given by \eqref{eq:h-Jacobian} is invertible at a point $D'\in\R^N$ then, by the inverse function theorem, $h$ is locally invertible: there exist open sets $\mathcal{U},\mathcal{V}\subset\R^N$ such that $D'\in\mathcal{U}$, $h(D')\in\mathcal{V}$, and $h:\mathcal{U}\to\mathcal{V}$ and $h^{-1}:\mathcal{V}\to\mathcal{U}$ are continuously differentiable bijections. So, for all initial wealth vectors $(x^1_0,...,x^N_0)$ such that $\ln(x^1_0,...,x^N_0)\in\mathcal{V}$ there exists a (maybe non-unique) Nash equilibrium given by
$$\mathcal{H}(D),\quad \text{where } D=(h\vert_{\mathcal{U}})^{-1}\circ\ln(x^1_0,...,x^N_0).$$
Moreover, by Lemma \ref{lem:continuity} and by continuity of $(h\vert_{\mathcal{U}})^{-1}\circ\ln$, the corresponding Nash equilibria are continuous in initial wealth vector $(x^1_0,...,x^N_0)\in \exp(\mathcal{V})$ with respect to convergence almost surely and in $L^p$ for all $p>0$, where by convergence in $L^p$ we mean convergence in both $(L^p(\Omega,\mathcal{F},\PP))^N$ and $(L^p(\Omega,\mathcal{F},\Q))^N$ simultaneously.
\end{rem}

%%%%%%%%%%%%%%%%%
%%%%%%%%%%%%%%%%%

\section{Inverting function $h$}\label{section:inv-h}

By the results of the previous section, if the function $h$ has a global inverse $h^{-1}:\R^N\to\R^N$ then for every initial wealth vector $(x^1_0,...,x^N_0)\in\R^N_+$ there exists a unique Nash equilibrium for \eqref{eq:util-max} and it is given by
\begin{equation}\label{eq:solution-formula}
(X^{1,*}_T,...,X^{N,*}_T)=\mathcal{H}\circ h^{-1}\circ \ln(x^1_0,...,x^N_0).
\end{equation}
If $h^{-1}$ is continuous then by Lemma \ref{lem:continuity} the optimal strategies \eqref{eq:solution-formula} are continuous in initial wealth with respect to convergence almost surely and in $L^p$, $p>0$.

Next, we prove that $h:\R^N\to\R^N$ is a diffeomorphism in the following two cases:
\begin{enumerate}
\item The relative risk aversion of each player satisfies the lower bound \eqref{eq:RRA-bound} and is close to being constant.
\item The competition weights $\lambda_i$ of all players are close to zero and $RRA[U_i]$ is uniformly bounded away from infinity for every $i\in\mathcal{N}$.
\end{enumerate}

We shall apply a \textit{global inverse function theorem} for maps from $\R^N$ to $\R^N$. We recall that a differentiable map $f:\R^N\to\R^N$ is called a \textit{diffeomorphism} if it is bijective and its inverse is also differentiable. A continuous map $f$ is \textit{proper} if $f^{-1}(K)$ is compact whenever $K\subseteq \R^N$ is compact or, equivalently, if $\norm{x}\to\infty$ implies $\norm{f(x)}\to\infty$.

\begin{thm}\label{thm:global-inverse}
A $C^1$ map $f$ from $\R^N$ to $\R^N$ is a diffeomorphism if and only if $f$ is proper and the Jacobian $Jf$ is non-degenerate at any point.
\end{thm}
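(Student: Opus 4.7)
The forward direction is essentially a bookkeeping exercise: if $f$ is a diffeomorphism, then differentiating $f^{-1}\circ f = \mathrm{Id}$ shows that $Jf(x)$ is invertible at every point, and continuity of $f^{-1}$ transports compact sets to compact sets, so $f$ is proper. I would dispatch this direction in a few lines.

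For the nontrivial implication, the plan is to assume $f$ is $C^1$, proper, and has everywhere invertible Jacobian, and to deduce that $f$ is a global diffeomorphism. First, the classical inverse function theorem gives that $f$ is a local $C^1$-diffeomorphism at every point; in particular $f$ is an open map. Next, properness upgrades $f$ to a closed map (a proper continuous map between locally compact Hausdorff spaces sends closed sets to closed sets: if $C$ is closed and $y_n = f(x_n) \to y$, then $\{y_n\}\cup\{y\}$ is compact, so $\{x_n\}$ lies in a compact set by properness, and a limit point of $\{x_n\}$ provides a preimage of $y$ in $C$). Hence $f(\R^N)$ is simultaneously open and closed in $\R^N$; as it is nonempty and $\R^N$ is connected, $f$ is surjective.

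The heart of the argument, and what I expect to be the main obstacle, is injectivity. The key observation is that a proper local homeomorphism between connected, locally path-connected Hausdorff spaces is a covering map; the fibers $f^{-1}(y)$ are discrete (local injectivity) and compact (properness), hence finite, and evenly covered neighborhoods are constructed by taking a common evenly-covered neighborhood around the finitely many preimages. Once $f$ is known to be a covering map onto $\R^N$ and $\R^N$ is simply connected, standard covering space theory forces the cover to be trivial; since the total space $\R^N$ is connected, the trivial cover has exactly one sheet, i.e., $f$ is injective. An alternative, more analytic route I could take if I wanted to avoid invoking covering space theory explicitly is Hadamard's original path-lifting argument: for any smooth path $\gamma$ in the target, use the inverse function theorem locally to lift $\gamma$, and use properness to show the maximal lifting interval is all of $[0,1]$; combined with simple connectedness of $\R^N$, two lifts starting at the same point must end at the same point, yielding injectivity.

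Finally, having established that $f$ is a continuous bijection with continuous inverse (continuity of $f^{-1}$ is immediate since $f$ is open), the local inverse function theorem shows $f^{-1}$ is $C^1$ in a neighborhood of each point; uniqueness of the inverse makes these local $C^1$ inverses agree on overlaps, so $f^{-1}$ is globally $C^1$ on $\R^N$, completing the proof that $f$ is a diffeomorphism.
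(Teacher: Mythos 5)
Your proposal is correct, but it is worth noting that the paper does not actually prove Theorem~\ref{thm:global-inverse} at all: its ``proof'' consists of citations to \cite{gordon_diffeomorphisms_1972,gordon_addendum_1973}, to \cite[Corollary~4.3]{palais_natural_1959}, and to \cite{wu_global_1972} for a self-contained version. What you have written is essentially the standard argument that those references contain: the easy direction via the chain rule and continuity of $f^{-1}$; surjectivity from ``open (inverse function theorem) $+$ closed (properness) $+$ connectedness of $\R^N$''; injectivity from the fact that a proper local homeomorphism onto a connected, locally path-connected base is a finite-sheeted covering map, which must be trivial over the simply connected $\R^N$ and single-sheeted since the total space is connected; and smoothness of the inverse from the local inverse function theorem. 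The Hadamard path-lifting alternative you mention is precisely the ``elementary'' route of \cite{wu_global_1972} and \cite{gordon_diffeomorphisms_1972}. The only step stated a bit too briskly is the construction of evenly covered neighborhoods: after choosing disjoint neighborhoods $U_1,\dots,U_k$ of the finitely many preimages of $y$, each mapped homeomorphically onto a neighborhood of $y$, one must invoke properness a second time to shrink the common image $V$ so that $f^{-1}(V)\subseteq U_1\cup\dots\cup U_k$ (otherwise a stray sequence $z_n$ with $f(z_n)\to y$ but $z_n\notin\bigcup_i U_i$ could exist); properness yields a convergent subsequence whose limit would be an additional preimage of $y$, a contradiction. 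With that point spelled out, your proof is complete and, unlike the paper, self-contained.
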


\begin{proof}
See \cite{gordon_diffeomorphisms_1972,gordon_addendum_1973} and references therein or \cite[Corollary~4.3]{palais_natural_1959}. A more elementary (or self-contained) version of the proof can be found in \cite{wu_global_1972}.
\end{proof}

In the next two subsections, we show that in the two scenarios outlined above function $h$ is proper and $Jh$ is invertible. Since $h$ is $C^1$ by Proposition \ref{prop:h-properties}, this would imply that $h$ is a global diffeomorphism of $\R^N$.

%%%%%%%%%%%%%%%%%%%%%%%%
%%%%%%%%%%%%%%%%%%%%%%%%

\subsection{Close to CRRA}

We recall that a utility function $U$ has a \textit{constant relative risk aversion} $r\geq0$ (and is said to belong to the class of \textit{CRRA} utility functions) if it is equal, up to an affine transformation, to
\begin{equation}\label{eq:CRRA-utility}
U(x)=\begin{cases}
\frac{x^{1-r}}{1-r},& r\geq0, r\neq 1,\\
\ln(x),& r=1.
\end{cases}
\end{equation}

Let $\frac{\mu_i}{1+\mu_i}<r_i<\infty$ for $i\in\mathcal{N}$ so that all of the results so far hold if player $i$ has utility function of the form \eqref{eq:CRRA-utility} with constant relative risk aversion $r_i$ for $i\in\mathcal{N}$. In this case, $V_i(y)= -r_iy$ and $G=G_c$ is a linear invertible map satisfying $G_c(0)=0$ (the subscript $c$ for variables and functions will refer to \underline{c}onstant relative risk aversion). We denote its constant invertible (by Proposition \ref{prop:G-invertible}) Jacobian matrix by
\begin{equation}\label{eq:J-CRRA}
J_c=\begin{pmatrix}
-r_1 & r_1\mu_1-\mu_1 & ... & r_1\mu_1-\mu_1\\
r_2\mu_2-\mu_2 & -r_2 & ... & r_2\mu_2-\mu_2\\
\vdots & \vdots & \ddots & \vdots \\
r_N\mu_N-\mu_N & r_N\mu_N-\mu_N & ... & -r_N
\end{pmatrix}.
\end{equation}
By \eqref{eq:h-Jacobian}, the Jacobian of $h=h_c$ is given by $Jh_c=J[G_c^{-1}]=J_c^{-1}$, constant and invertible as well, hence $h_c(D)=h_c(0)+J_c^{-1}D$ is a linear bijective map from $\R^N$ to $\R^N$. Define 
$$(A^1,...,A^N)=J_c^{-1}\mathbbm{1}\in\R^N.$$
By \eqref{eq:def-of-hH}, $h_c(0)=\ln\Ex^\Q\left[\exp(\ln{Z}\cdot A)\right]=(\ln\Ex^\Q[Z^{A^1}],...,\ln\Ex^\Q[Z^{A^N}])$. For an arbitrary given initial wealth vector $(x^1_0,...,x^N_0)\in\R^N_+$ and for utility functions $U_i$ given by \eqref{eq:CRRA-utility} with $RRA[U_i]\equiv r_i>\frac{\mu_i}{1+\mu_i}$ for all $i\in\mathcal{N}$, we obtain the unique Nash equilibrium via the formula
\begin{equation}\label{eq:CRRA-NE}
\begin{aligned}
(X^{1,*}_{T,c},...,X^{N,*}_{T,c})&\overset{\eqref{eq:solution-formula}}{=}\mathcal{H}_c\circ h_c^{-1}\circ \ln(x^1_0,...,x^N_0)\\
&=\exp\left(J_c^{-1}[ J_c\{\ln(x^1_0,...,x^N_0)-h_c(0)\}+\ln{Z}\cdot\mathbbm{1}]\right)\\
&=\exp\left(\ln(x^1_0,...,x^N_0)-h_c(0)+\ln{Z}\cdot A\right)=\left(\frac{x^1_0\cdot Z^{A^1}}{\Ex^\Q[Z^{A^1}]},...,\frac{x^N_0\cdot Z^{A^N}}{\Ex^\Q[Z^{A^N}]}\right).
\end{aligned}
\end{equation}

The invertibility of $h$ is preserved under sufficiently small fluctuations of relative risk aversion of player $i$ around $r_i$, as is summarized in the following proposition.

\begin{prop}\label{prop:close-to-CRRA}
Let Assumptions \ref{ass:utility2} and \ref{ass:mart-meas2} hold and let $\frac{\mu_i}{1+\mu_i}<r_i<\infty$. Then for $\varepsilon>0$ small enough, the condition
\begin{equation}\label{eq:almost-CRRA}
\vert RRA[U_i](x)-r_i\vert\leq \varepsilon,\quad \text{for all } x>0,\ i\in\mathcal{N},
\end{equation}
implies that $h:\R^N\to\R^N$ is a diffeomorphism. Therefore, for $N$ players solving \eqref{eq:util-max} with an arbitrary given initial wealth vector $(x_0^1,...,x_0^N)\in\R^N_+$ there exists a unique Nash equilibrium $(X^{1,*}_T,...,X^{N,*}_T)$. As~$\varepsilon\downarrow 0$, $(X^{1,*}_T,...,X^{N,*}_T)$ converges to $(X^{1,*}_{T,c},...,X^{N,*}_{T,c})$ given by \eqref{eq:CRRA-NE} almost surely and in $L^p$ for all $p>0$.
\end{prop}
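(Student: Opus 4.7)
The plan is to apply the global inverse function theorem (Theorem~\ref{thm:global-inverse}) to $h$. Since $h$ is already $C^1$ by Proposition~\ref{prop:h-properties}, it suffices to verify, for $\varepsilon$ small enough, (a)~that $Jh(D)$ is non-degenerate for every $D\in\R^N$, and (b)~that $h$ is proper. The intuition is that in the CRRA limit $h_c$ is the affine bijection $D\mapsto h_c(0)+J_c^{-1}D$, and under \eqref{eq:almost-CRRA} the map $h$ becomes a small $C^1$-perturbation of $h_c$ uniformly on all of $\R^N$; once this is established, the rest follows mechanically, and the convergence statement comes from the same uniform bounds plus dominated convergence.

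The first step is a uniform estimate on $JG$. From \eqref{eq:express-G} one computes $\partial_i G^i(y)=V_i'(\xi_i)$ and $\partial_j G^i(y)=-\mu_i(V_i'(\xi_i)+1)$ for $j\neq i$, where $\xi_i=y^i-\sum_{k\neq i}\mu_i y^k$. Since $-V_i'(y)=RRA[U_i](\exp y)$, condition \eqref{eq:almost-CRRA} yields $\lvert V_i'(\xi_i)+r_i\rvert\leq\varepsilon$, and hence $\norm{JG(y)-J_c}\leq C\varepsilon$ for a constant $C$ independent of $y\in\R^N$. Because $J_c$ is invertible and matrix inversion is continuous on the open set of invertible matrices, for $\varepsilon$ small enough $JG(y)$ is uniformly non-singular and $\norm{J G^{-1}(w)-J_c^{-1}}=\norm{JG(G^{-1}(w))^{-1}-J_c^{-1}}\leq C'\varepsilon$ uniformly in $w\in\R^N$. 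Plugging this estimate into \eqref{eq:h-Jacobian}, which writes each entry of $Jh(D)$ as a $\Q$-weighted expectation of $\partial_j[G^{-1}]^i$ evaluated at random arguments, yields $\norm{Jh(D)-J_c^{-1}}\leq C'\varepsilon$ uniformly in $D\in\R^N$, making $Jh(D)$ invertible at every point.

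For properness, the same uniform Jacobian bound and the fundamental theorem of calculus give, for all $D_1,D_2\in\R^N$,
\[
h(D_1)-h(D_2)=\int_0^1 Jh\bigl(D_2+t(D_1-D_2)\bigr)(D_1-D_2)\,dt=J_c^{-1}(D_1-D_2)+R(D_1,D_2),
\]
with $\norm{R(D_1,D_2)}\leq C'\varepsilon\norm{D_1-D_2}$. Letting $\sigma_{\min}(J_c^{-1})>0$ denote the smallest singular value of $J_c^{-1}$, we obtain $\norm{h(D_1)-h(D_2)}\geq(\sigma_{\min}(J_c^{-1})-C'\varepsilon)\norm{D_1-D_2}$, a positive multiple of $\norm{D_1-D_2}$ once $\varepsilon$ is small. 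This yields coercivity and hence properness. Theorem~\ref{thm:global-inverse} then gives that $h:\R^N\to\R^N$ is a diffeomorphism, and Theorem~\ref{thm:NE} together with \eqref{eq:solution-formula} delivers a unique Nash equilibrium for each initial wealth vector.

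For convergence, take any sequence of admissible utility profiles satisfying \eqref{eq:almost-CRRA} with $\varepsilon_n\downarrow 0$, decorate the corresponding objects with an index $n$, and note that the bounds above give $\norm{JG_n-J_c}$ and $\norm{Jh_n-J_c^{-1}}$ of order $\varepsilon_n$ uniformly. In particular, $G_n^{-1}\to G_c^{-1}$ uniformly on compact sets, their Lipschitz constants are uniformly bounded in $n$, and $h_n$ enjoys a uniform two-sided Lipschitz bound. A standard continuity-of-inversion argument then yields $D_n:=h_n^{-1}\bigl(\ln{x_0^1},\ldots,\ln{x_0^N}\bigr)\to D_c:=h_c^{-1}\bigl(\ln{x_0^1},\ldots,\ln{x_0^N}\bigr)$ in $\R^N$. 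Pointwise in $\omega$, uniform convergence of $G_n^{-1}$ and convergence of $D_n$ give $\mathcal{H}_n(D_n)(\omega)\to\mathcal{H}_c(D_c)(\omega)$, i.e.\ almost sure convergence to $(X^{1,*}_{T,c},\ldots,X^{N,*}_{T,c})$. The uniform-in-$n$ estimate $\lvert\ln\mathcal{H}_n^i(D_n)(\omega)\rvert\leq M(\lvert\ln Z(\omega)\rvert+\norm{D_n})+M_0$, arising from the uniform Lipschitz constant of $G_n^{-1}$, together with Assumption~\ref{ass:mart-meas2}, dominates $\mathcal{H}_n^i(D_n)^p$ by a fixed $\PP$- and $\Q$-integrable random variable as in Lemma~\ref{lem:continuity}, and the dominated convergence theorem upgrades the convergence to $L^p(\PP)$ and $L^p(\Q)$ for every $p>0$. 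The main technical obstacle is the uniform-in-$D$ control of $Jh(D)$: one must check that the $\mathcal{H}$-weighting in \eqref{eq:h-Jacobian} does not spoil the $O(\varepsilon)$ estimate, which works precisely because $\partial_j[G^{-1}]^i-J_c^{-1}_{ij}$ is bounded by a deterministic $O(\varepsilon)$ quantity independently of the weighting random variable.
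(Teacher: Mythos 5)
Your proof follows essentially the same route as the paper's: a uniform estimate $\norm{JG(y)-J_c}\le C\varepsilon$, transferred to $JG^{-1}$ by continuity of matrix inversion (the paper's Lemma~\ref{lem:inverse-est1}), then to $Jh$ via the weighted-average structure of \eqref{eq:h-Jacobian} (your observation that the weighting cannot spoil a deterministic $O(\varepsilon)$ bound is exactly the paper's Jensen-inequality step), properness via a mean-value/FTC coercivity bound, Theorem~\ref{thm:global-inverse}, and dominated convergence for the limit. The only substantive omission is in the convergence step: condition \eqref{eq:almost-CRRA} controls only $V_i'=-RRA[U_i](\exp(\cdot))$ and says nothing about the value $V_i(0)$, so your claims that $G_n^{-1}\to G_c^{-1}$ uniformly on compacts and that $D_n\to D_c$ can fail as stated (e.g.\ multiplying $U_i'$ by a divergent constant preserves \eqref{eq:almost-CRRA} but shifts $G_n$, $h_n$, and $D_n$ off to infinity, even though the equilibrium itself is unchanged). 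The paper handles this by normalizing $U_i'(1)=1$, i.e.\ $G(0)=G_c(0)=0$, which is without loss of generality because an affine rescaling of $U_i$ does not affect the optimizers; you should insert this normalization before asserting convergence of $G_n^{-1}$ and $h_n^{-1}$.
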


\begin{proof} See Appendix \ref{appendix:proofs}.
\end{proof}

%%%%%%%%%%%%%%%%%%%%%%%%
%%%%%%%%%%%%%%%%%%%%%%%%

\subsection{Small $\lambda_i$}

Next we consider the case when players are not much concerned with competing with others, that is, $\lambda_i$'s are close to zero. In case when $\lambda_i=0$ for all $i\in\mathcal{N}$, the $N$-player game described by \eqref{eq:util-max} decouples into $N$ independent optimization problems
\begin{equation}\label{eq:zero-NE}
\sup_{X^{i}_T\in\mathcal{X}(x^i_0)}\Ex[U_i(X^i_T)],\quad i\in\mathcal{N},
\end{equation}
and each under Assumptions \ref{ass:utility2} and \ref{ass:mart-meas2} has, by Theorem \ref{thm:duality-for-complete} and by an argument analogous to the proof of Lemma \ref{lem:integrability-satisfied} (to show finiteness of the dual value function), the unique optimizer given by $X^{i,*}_{T,n}=I_i(y^iZ)$ for a constant $y^i>0$ such that $\Ex^\Q[I_i(y^iZ)]=x_0^i$, where $I_i=(U'_i)^{-1}$ is the inverse marginal utility function (the subscript $n$ for variables and functions, as in $X^{i,*}_{T,n}$, will refer to the current \underline{n}o-competition case: $\lambda_i=0$ for all $i\in\mathcal{N}$). This amounts to the corresponding functions $G_n, h_n, \mathcal{H}_n$ being diagonal: for $i\in\mathcal{N}$ and $y,D\in\R^N$, we have
\begin{equation}\label{eq:no-comp-functions}
G^i_n(y)=V_i(y^i),\quad h^i_n(D)=\ln\Ex^\Q\left[\exp{V_i^{-1}(D^i+\ln{Z})}\right],\quad \mathcal{H}_n^i(D)=\exp{V_i^{-1}(D^i+\ln{Z})}.
\end{equation}
Indeed, the optimizers given by $X^{i,*}_{T,n}=I_i(y^iZ)$ with $y^i>0$ such that $\Ex^\Q[I_i(y^iZ)]=x_0^i$ via Theorem~\ref{thm:duality-for-complete} and by 
$(X^{1,*}_{T,n},...,X^{N,*}_{T,n})=\mathcal{H}_n\circ h_n^{-1}\circ \ln(x^1_0,...,x^N_0)$
are identical, since $V_i^{-1}(z)=\ln{I_i(\exp{z})}$ and $y^i=\exp((h^i_n)^{-1}\circ\ln(x_0^i))$.

\begin{prop}\label{prop:lambda-small}
Let Assumptions \ref{ass:utility2} and \ref{ass:mart-meas2} hold and assume that for each $i\in\mathcal{N}$ there exists an $R_i>0$ such that $RRA[U_i](x)\leq R_i$ for all $x>0$. Then for $\varepsilon>0$ small enough, the condition
$$0\leq\lambda_i\leq \varepsilon,\quad \text{for all } i\in\mathcal{N},$$
implies that $h:\R^N\to\R^N$ is a diffeomorphism. Therefore, for $N$ players solving \eqref{eq:util-max} with an arbitrary given initial wealth vector $(x_0^1,...,x_0^N)\in\R^N_+$ there exists a unique Nash equilibrium $(X^{1,*}_T,...,X^{N,*}_T)$. As~$\varepsilon\downarrow 0$, $(X^{1,*}_T,...,X^{N,*}_T)$ converges to $(X^{1,*}_{T,n},...,X^{N,*}_{T,n})$, the vector of optimizers of the $N$ decoupled utility mazimization problems \eqref{eq:zero-NE}, almost surely and in $L^p$ for all $p>0$.
\end{prop}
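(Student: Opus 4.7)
I would deduce the proposition from the global inverse function theorem (Theorem~\ref{thm:global-inverse}): since $h\in C^1$ by Proposition~\ref{prop:h-properties}, it suffices to show that, for $\varepsilon$ sufficiently small, the Jacobian $Jh(D)$ is invertible at every $D\in\R^N$ and that $h$ is proper. I plan to derive both conclusions from a uniform bi-Lipschitz estimate obtained by perturbing off the decoupled case $\lambda\equiv 0$.

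First I would analyze $JG$ directly. From \eqref{eq:express-G} one computes $\partial_i G^i(y)=V_i'(w^i)$ and $\partial_j G^i(y)=-\mu_i(V_i'(w^i)+1)$ for $j\neq i$, where $w^i=y^i-\sum_{k\neq i}\mu_i y^k$. Under the present hypotheses $V_i'(w^i)\in[-R_i,-\varepsilon_i]$ uniformly in $y$, so $JG(y)=D_y+E_y$, where $D_y$ is diagonal with entries in $[-R_i,-\varepsilon_i]$ and $\norm{E_y}=O(\lambda_{\max})$ uniformly in $y$, with $\lambda_{\max}:=\max_i\lambda_i$. A Neumann series expansion, valid once $\norm{D_y^{-1}E_y}<1$, yields $(JG)^{-1}(y)=D_y^{-1}+O(\lambda_{\max})$ uniformly in $y$; in particular the diagonal entries of $J(G^{-1})$ are negative with magnitude at least $1/R_i-O(\lambda_{\max})$ and the off-diagonal entries have magnitude $O(\lambda_{\max})$.

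Since \eqref{eq:h-Jacobian} expresses each entry $\partial_j h^i(D)$ as a positive-weighted average of $\partial_j[G^{-1}]^i(D+\ln Z\cdot\mathbbm{1})$, the same structural bounds pass to $Jh(D)$ uniformly in $D\in\R^N$; for $\varepsilon$ small this renders $Jh(D)$ strictly diagonally dominant with uniformly bounded inverse. To upgrade this to properness --- the main technical point, since the pointwise difference $h-h_n$ need not be bounded on $\R^N$ --- I would apply the mean value theorem: writing $h(D_2)-h(D_1)=\bar{J}(D_2-D_1)$ with $\bar{J}:=\int_0^1 Jh(D_1+t(D_2-D_1))\,dt$, the averaged matrix inherits the same diagonal-dominance structure, so $\norm{\bar{J}^{-1}}\leq C$ uniformly in $D_1,D_2$, yielding $\norm{h(D_2)-h(D_1)}\geq C^{-1}\norm{D_2-D_1}$. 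This bi-Lipschitz bound implies properness, and Theorem~\ref{thm:global-inverse} then gives that $h$ is a diffeomorphism of $\R^N$; existence and uniqueness of the Nash equilibrium for any $(x_0^1,\ldots,x_0^N)\in\R^N_+$ follow from Theorem~\ref{thm:NE}.

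For the convergence as $\varepsilon\downarrow 0$, set $D_\varepsilon:=h^{-1}(\ln x_0^1,\ldots,\ln x_0^N)$ and $D_0:=h_n^{-1}(\ln x_0^1,\ldots,\ln x_0^N)$, with $h_n$ and $\mathcal{H}_n$ as in \eqref{eq:no-comp-functions}. Pointwise convergence $G^{-1}(z)\to G_n^{-1}(z)$ for each $z\in\R^N$ together with dominated convergence (using the uniform Lipschitz bound on $G^{-1}$ and Assumption~\ref{ass:mart-meas2}) gives $h(D_0)\to h_n(D_0)$. Combined with the identity $h(D_\varepsilon)=h_n(D_0)$ and the uniform bi-Lipschitz bound above, this forces $D_\varepsilon\to D_0$. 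Hence $\mathcal{H}(D_\varepsilon)\to\mathcal{H}_n(D_0)=(X^{1,*}_{T,n},\ldots,X^{N,*}_{T,n})$ almost surely by continuity of $G^{-1}$, and $L^p$ convergence for any $p>0$ follows by dominated convergence, the uniform Lipschitz bound on $G^{-1}$ dominating $\mathcal{H}(D_\varepsilon)^p$ by a power of $Z\vee Z^{-1}$ that is both $\PP$- and $\Q$-integrable under Assumption~\ref{ass:mart-meas2}.
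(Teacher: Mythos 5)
Your proposal is correct and follows essentially the same route as the paper's proof: decompose $JG$ as a diagonal matrix with entries bounded away from zero plus an $O(\lambda_{\max})$ perturbation (using the upper bound $R_i$ on the relative risk aversion), transfer this structure to $Jh$ via the weighted-average formula \eqref{eq:h-Jacobian}, deduce uniform invertibility from diagonal dominance and properness from the mean value theorem, and then obtain the convergence from pointwise convergence $G^{-1}\to G_n^{-1}$ together with the uniform Lipschitz bounds and dominated convergence. The only differences are cosmetic (you use the integral form of the mean value theorem between arbitrary points where the paper uses a row-wise version anchored at the origin, and you route the convergence of $D_\varepsilon$ through the bi-Lipschitz bound rather than through $h^{-1}\to h_n^{-1}$ directly).
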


\begin{proof} See Appendix \ref{appendix:proofs}.
\end{proof}

%%%%%%%%%%%%%%%%%%%%%%
%%%%%%%%%%%%%%%%%%%%%%

\begin{appendices}

\section{Duality in a complete market}\label{appendix:duality}

In this appendix, we state and prove a version of duality theorem for utility maximization in complete markets, tailored to our particular needs in this paper. We denote
$$\mathcal{C}(x)=\{g\in L_+^0(\Omega,\mathcal{F},\PP): g\leq X_T \text{ for some }X\in\tilde{\mathcal{X}}(x)\}.$$
We state for reference a well established characterization of sets $\mathcal{C}(x)$ via equivalent local martingale measures (see \cite[p.~926]{kramkov_asymptotic_1999}, even though the proof is omitted there) and sketch how this characterization follows from the results of \cite{delbaen_fundamental_1998}.
\begin{prop}\label{prop:primal-dom}
Assume that $\mathcal{M}\neq\emptyset$. Then for $x>0$ and $g\in L_+^0(\Omega,\mathcal{F},\PP)$,
$$g\in\mathcal{C}(x)\quad \Leftrightarrow\quad \Ex^\Q[g]\leq x\ \text{ for all }\ \Q\in\mathcal{M}.$$
In particular, if $\Q$ is the unique ELMM then $g\in\mathcal{C}(x)$ if and only if $\Ex^\Q[g]\leq x$.
\end{prop}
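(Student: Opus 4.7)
\textbf{Proof plan for Proposition \ref{prop:primal-dom}.} The forward implication is the easy half: if $g \le X_T$ for some $X \in \tilde{\mathcal{X}}(x)$, then by definition of $\mathcal{M}$ the process $X$ is a non-negative local martingale under every $\Q \in \mathcal{M}$, hence a $\Q$-supermartingale by Fatou. In particular $\Ex^\Q[g] \le \Ex^\Q[X_T] \le X_0 = x$. No deeper input is required here.

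For the reverse implication, suppose $g \in L^0_+$ satisfies $\Ex^\Q[g] \le x$ for every $\Q \in \mathcal{M}$. The plan is to produce a dominating admissible wealth process by way of the \emph{optional decomposition theorem}. Define
\begin{equation*}
V_t \;=\; \esssup_{\Q \in \mathcal{M}} \Ex^\Q[g \mid \mathcal{F}_t], \qquad t \in [0,T].
\end{equation*}
The first step is to show that $V$ admits a càdlàg version which is a $\Q$-supermartingale under \emph{every} $\Q \in \mathcal{M}$ with $V_0 \le x$. This is standard and rests on the stability of $\mathcal{M}$ under pasting of Radon--Nikod\'ym densities: the family $\{\Ex^\Q[g \mid \mathcal{F}_t]: \Q \in \mathcal{M}\}$ is upward directed, so the essential supremum is attained along an increasing sequence, from which the supermartingale property under each $\Q \in \mathcal{M}$ and the bound $V_0 = \sup_{\Q} \Ex^\Q[g] \le x$ follow.

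The second, and main, step is to invoke the optional decomposition theorem (see \cite{delbaen_fundamental_1998} and also Kramkov's and F\"ollmer--Kabanov's versions) to write
\begin{equation*}
V_t \;=\; V_0 + \int_0^t H_u\,dS_u - A_t, \qquad t\in[0,T],
\end{equation*}
for some predictable $S$-integrable $H$ and some non-decreasing optional process $A$ with $A_0 = 0$. Setting
\begin{equation*}
X_t \;=\; x + \int_0^t H_u\,dS_u \;=\; (x - V_0) + V_t + A_t,
\end{equation*}
the inequality $V_0 \le x$ together with $V_t \ge 0$ and $A_t \ge 0$ yields $X_t \ge 0$ for all $t$, so $X \in \tilde{\mathcal{X}}(x)$; and at $t=T$ one has $X_T \ge V_T = g$, which gives $g \in \mathcal{C}(x)$.

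The hard part of this argument is genuinely the optional decomposition theorem itself, not the bookkeeping: one needs to pass from a family of expectation inequalities (under every $\Q \in \mathcal{M}$) to a pathwise decomposition in terms of the single semimartingale $S$. In the complete case $\mathcal{M} = \{\Q\}$, however, the construction simplifies dramatically, since $V_t = \Ex^\Q[g \mid \mathcal{F}_t]$ is just a single conditional expectation, $V_0 = \Ex^\Q[g]$, and the ``$A = 0$'' case of the optional decomposition reduces to the martingale representation theorem that is built into the singleton-ELMM hypothesis via Assumption~\ref{ass:mart-meas}.
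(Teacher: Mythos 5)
Your proposal is correct, and both halves match what the paper needs, but the second half follows a genuinely different route. The forward implication is identical to the paper's (non-negative local martingale $\Rightarrow$ supermartingale $\Rightarrow$ $\Ex^\Q[g]\le x$). For the converse, you re-derive the superhedging duality from scratch: you build the upper Snell envelope $V_t=\esssup_{\Q\in\mathcal{M}}\Ex^\Q[g\mid\mathcal{F}_t]$, invoke stability of $\mathcal{M}$ under pasting to get the uniform supermartingale property, and then apply the optional decomposition theorem to manufacture the dominating admissible wealth process explicitly. The paper instead treats the superreplication theorem as a black box: it introduces the sets $\mathcal{M}^e_\sigma$ and $\mathcal{M}^e_s$ of equivalent sigma-martingale and supermartingale measures, uses the Ansel--Stricker lemma to sandwich $\mathcal{M}^e_\sigma\subseteq\mathcal{M}\subseteq\mathcal{M}^e_s$, notes that $\mathcal{M}\neq\emptyset$ gives NFLVR and hence (by Theorem~4.1 and Proposition~4.7 of \cite{delbaen_fundamental_1998}) the density of $\mathcal{M}^e_\sigma$ in $\mathcal{M}$, and then cites Theorem~5.12 and Lemma~5.11 of \cite{delbaen_fundamental_1998} for the identity $\sup_{\Q\in\mathcal{M}^e_\sigma}\Ex^\Q[g]=\min\{\alpha: g\le X_T \text{ for some } X\in\tilde{\mathcal{X}}(\alpha)\}$. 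What the paper's route buys is a clean handling of the fact that, for unbounded semimartingales, the quoted duality is naturally stated over the sigma-martingale measures rather than over $\mathcal{M}$ itself; your argument should acknowledge this same point, since the general (F\"ollmer--Kabanov) optional decomposition theorem is likewise formulated for the class $\mathcal{M}^e_\sigma$ --- though this is harmless here, because $\mathcal{M}^e_\sigma\subseteq\mathcal{M}$ means your envelope is a supermartingale under every measure in the smaller class and your hypothesis is only used through $V_0\le x$. What your route buys is self-containedness and an explicit construction of the superhedging portfolio rather than an appeal to the attainment lemma. One further small point: the identity $V_T=g$ tacitly uses $\mathcal{F}=\mathcal{F}_T$, which the paper only records as a consequence of completeness; in the general statement with $g\in L^0_+(\Omega,\mathcal{F},\PP)$ you should either assume this or replace $g$ by $\Ex^\Q[g\mid\mathcal{F}_T]$ appropriately.
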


\begin{proof}
``$\Rightarrow$". We have $g\leq X_T$ for some $X\in\tilde{\mathcal{X}}(x)$ and, by the definition of $\mathcal{M}$, $X$ is a non-negative local martingale under every $\Q\in\mathcal{M}$. In particular,  $X$ is a supermartingale with $X_0=x$ under every $\Q\in\mathcal{M}$, implying $\Ex^\Q[g]\leq \Ex^\Q[X_T]\leq x$.

``$\Leftarrow$". Let us define $\mathcal{M}^e_s=\{\Q: \Q\sim\PP\text{ and every }X\in\tilde{\mathcal{X}}(1)\text{ is a supermartingale under }\Q\}$ and
$\mathcal{M}^e_\sigma=\{\Q:  \Q\sim\PP\text{ and } S\text{ is a sigma-martingale under }\Q\}$ (see Section~2 of \cite{delbaen_fundamental_1998} for the definition and properties of sigma-martingales). Then $\mathcal{M}\subseteq \mathcal{M}^e_s$ due to the fact that every non-negative local martingale is a supermartingale. Furthermore, by the Ansel-Stricker lemma rephrased in the context of sigma-martingales (see \cite[Theorem 5.3]{delbaen_fundamental_1998}),  if $X$ is a sigma-martingale under $\Q$ then every $X\in\tilde{\mathcal{X}}(1)$ is a local martingale under $\Q$, hence $\mathcal{M}^e_\sigma\subseteq \mathcal{M}\subseteq \mathcal{M}^e_s$. The assumption $\mathcal{M}\neq\emptyset$ implies $\mathcal{M}^e_s\neq\emptyset$, which readily implies NFLVR (the No Free Lunch with Vanishing Risk condition) by its definition. In fact, by Theorem~4.1 and Proposition~4.7 of \cite{delbaen_fundamental_1998}, not only $\mathcal{M}^e_\sigma\subseteq \mathcal{M}\subseteq \mathcal{M}^e_s$ but if NFLVR holds then $\mathcal{M}^e_\sigma\neq\emptyset$, $\mathcal{M}^e_\sigma$ is dense in $\mathcal{M}^e_s$ and, as a consequence, it is dense in $\mathcal{M}$. (In particular, $\mathcal{M}\neq\emptyset$ is equivalent to NFLVR.) Under NFLVR, by Theorem~5.12 in \cite{delbaen_fundamental_1998} and since $g\geq 0$ and $\mathcal{M}^e_\sigma$ is dense in $\mathcal{M}$, we have
$$x\geq \sup_{\Q\in\mathcal{M}}\Ex^\Q[g]=\sup_{\Q\in\mathcal{M}^e_\sigma}\Ex^\Q[g]=\inf\{\alpha:\ g\leq X_T\text{ for some }X\in\tilde{\mathcal{X}}(\alpha)\}.$$
Moreover, by Lemma~5.11 in \cite{delbaen_fundamental_1998}, the infimum on the right-hand side is a minimum, i.e., $g\in\mathcal{C}(x)$.
\end{proof}

Next, let $L\in L_+^0(\Omega,\mathcal{F},\PP)$ be such that $L>0$ and $\Ex^\Q[L]<\infty$. We consider the expected utility maximization problem described by the following value function
\begin{equation}\label{eq:value-func}
u(x)=\sup_{g\in\mathcal{C}(x)}\Ex[U(g/L)],\quad x>0.
\end{equation}
The random variable $L$ can be thought of as a num\'eraire. Since $L$ is $\Q$-integrable, for every $x>0$ and $\alpha=x/\Ex^\Q[L]>0$ the random variable $\alpha L$ belongs to $\mathcal{C}(x)$. Hence, $u(x)\geq \Ex[U(\alpha L/L)]=U(\alpha)>-\infty$ for every $x>0$. In particular, if $g\in\mathcal{C}(x)$ is the optimizer for \eqref{eq:value-func} then this automatically assumes $\Ex[U^-(g/L)]<\infty$ by our convention 
\eqref{eq:convention}.

\begin{thm}\label{thm:duality-for-complete}
Let Assumptions \ref{ass:mart-meas}, \ref{ass:utility} hold, $L\in L_+^0(\Omega,\mathcal{F},\PP)$ be such that $L>0$ and $\Ex^\Q[L]<\infty$, and $g\in\mathcal{C}(x)$. Then, the following are equivalent:
\begin{enumerate}
\item[1)] $g\in\mathcal{C}(x)$ is the optimizer for \eqref{eq:value-func} and $\Ex[U(g/L)]<\infty$;
\item[2)] $\Ex^\Q[g]=x$ and there exists a constant $y>0$ such that $U'(g/L)=yLZ$, $\PP-$almost surely, and $\Ex[\tilde{U}^+(yLZ)]<\infty$.
\end{enumerate}
\end{thm}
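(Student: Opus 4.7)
The overall plan is to exploit the fact that, by Proposition~\ref{prop:primal-dom}, the feasible set has the simple linear form $\mathcal{C}(x) = \{g \in L^0_+ : \Ex^\Q[g] \leq x\}$. Combined with Fenchel conjugacy between $U$ and $\tilde{U}$, this reduces everything to a Lagrangian-style duality with multiplier $y>0$ attached to that single budget constraint.

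For the easier direction, $(2)\Rightarrow(1)$, the plan is to apply the Fenchel inequality $U(a)\leq\tilde{U}(b)+ab$ with $a=h/L$ and $b=yLZ$, obtaining $U(h/L)\leq \tilde{U}(yLZ)+yZh$ for every $h\in\mathcal{C}(x)$, with equality at $h=g$ from the first-order condition $U'(g/L)=yLZ$. Taking positive parts and using $\Ex[\tilde{U}^+(yLZ)]<\infty$ together with $\Ex^\Q[h]\leq x$ bounds $\Ex[U^+(h/L)]$, so every $\Ex[U(h/L)]$ is well-defined in $[-\infty,\infty)$, and at $h=g$ one obtains $\Ex[U(g/L)] = \Ex[\tilde{U}(yLZ)] + yx$ in the extended sense. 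Comparing with the deterministic benchmark $\alpha L\in\mathcal{C}(x)$, $\alpha = x/\Ex^\Q[L]$, which attains the finite value $U(\alpha)>-\infty$, then forces $\Ex[\tilde{U}^-(yLZ)]<\infty$, making $\Ex[U(g/L)]$ finite and yielding $\Ex[U(h/L)] \leq \Ex[U(g/L)]$ for every feasible $h$.

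For direction $(1)\Rightarrow(2)$ I would reduce to the standard num\'eraire-free duality by a change of reference measure. Setting $\kappa = \Ex^\Q[L]\in(0,\infty)$ and defining $\tilde{\Q}$ by $d\tilde{\Q}/d\Q = L/\kappa$ gives $\tilde{Z} := d\tilde{\Q}/d\PP = LZ/\kappa$. The substitution $\tilde{g} = g/L$ sends $\Ex^\Q[g]\leq x$ into $\Ex^{\tilde{\Q}}[\tilde{g}]\leq x/\kappa$ and preserves the objective $\Ex[U(g/L)] = \Ex[U(\tilde{g})]$, so the transformed problem is a classical complete-market utility maximization with unique state-price density $\tilde{Z}$. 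Theorem~2.0 of \cite{kramkov_asymptotic_1999} applied there produces $\tilde{y}>0$ satisfying $U'(\tilde{g})=\tilde{y}\tilde{Z}$, $\Ex^{\tilde{\Q}}[\tilde{g}]=x/\kappa$, and $\Ex[\tilde{U}(\tilde{y}\tilde{Z})]<\infty$; setting $y=\tilde{y}/\kappa$ recovers (2). The auxiliary fact that the budget binds, $\Ex^\Q[g]=x$, is obtained from strict monotonicity of $U$ by a standard up-scaling argument on $g$.

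The step I expect to be most delicate is establishing $\Ex[\tilde{U}^-(yLZ)]<\infty$ in direction $(2)\Rightarrow(1)$: the hypothesis controls only the positive part of $\tilde{U}$, so the only lever is the concrete feasible benchmark $\alpha L$, which forces the value $u(x)$ to lie strictly above $-\infty$ and thereby pushes the required finiteness of $\Ex[\tilde{U}(yLZ)]$. In direction $(1)\Rightarrow(2)$ the main item of bookkeeping is to verify that after the change of measure the set $\{\tilde{g}\geq 0 : \Ex^{\tilde{\Q}}[\tilde{g}]\leq x/\kappa\}$ matches the abstract primal domain in Kramkov--Schachermayer's complete-market statement, which is again Proposition~\ref{prop:primal-dom} applied in the transformed market.
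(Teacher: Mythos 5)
Your direction $(2)\Rightarrow(1)$ is essentially the paper's argument (Fenchel inequality plus the first--order condition and the binding budget constraint) and it goes through; note only that you do not actually need to establish $\Ex[\tilde U^-(yLZ)]<\infty$ there, since the pointwise concavity inequality $U(\tilde g/L)\le U(g/L)+U'(g/L)(\tilde g/L-g/L)$ together with $\Ex[Z\tilde g]\le x=\Ex[Zg]$ already yields optimality, and the positive--part Fenchel bound $U^+(g/L)\le \tilde U^+(yLZ)+yZg$ gives $\Ex[U^+(g/L)]<\infty$ directly.

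The genuine gap is in $(1)\Rightarrow(2)$. Your change of measure $d\widetilde{\Q}/d\Q=L/\kappa$ does reduce matters to the case $L\equiv 1$ with density $\widetilde Z=LZ/\kappa$ (the identification of the transformed budget set is indeed covered by Proposition~\ref{prop:primal-dom}), but Theorem~2.0 of \cite{kramkov_asymptotic_1999} cannot then deliver the conclusion. As recalled in Remark~\ref{rem:non-existence-optimizer}, that theorem only asserts \emph{existence} of an optimizer of the form $I(yZ)$ for initial wealths $x<x_0=\lim_{y\downarrow y_0}\Ex^\Q[I(yZ)]$; it is silent about $x\ge x_0$. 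Direction $(1)\Rightarrow(2)$ starts from an arbitrary optimizer $g$ with finite value at an arbitrary $x$ and must produce the multiplier $y$; if $x\ge x_0$, Theorem~2.0 gives you no candidate to compare $g$ with (even using uniqueness from strict concavity), so you cannot exclude an optimizer violating the first--order condition. In fact, the statement that no optimizer exists for $x>x_0$ is precisely the \emph{new} information the paper obtains by combining Theorem~\ref{thm:duality-for-complete} with Theorem~2.0, so deriving Theorem~\ref{thm:duality-for-complete} from Theorem~2.0 is circular for exactly the cases that matter. The paper instead argues directly: it first gets $\Ex[\tilde U^-(yLZ)]<\infty$ for every $y>0$ from the benchmark $\alpha L$, then shows that $v(y)=\Ex[\tilde U(yLZ)]$ is the convex conjugate of $u$ (the nontrivial step being the truncation $\tilde U^n$ argument that upgrades the inequality in \eqref{eq:conjugacy} to an equality), invokes the bidual relation $u(x)=\inf_{y>0}[v(y)+xy]$, shows the infimum is attained because $v(y)+xy\to\infty$ as $y\to\infty$ (using $\tilde U'(\infty)=0$ and $\Ex[LZ]<\infty$), and reads off $U'(g/L)=yLZ$ from the resulting equality $\Ex[U(g/L)]=\Ex[\tilde U(yLZ)+(yLZ)g/L]$. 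You would need to supply an argument of this kind, or otherwise prove independently that an optimizer with finite value can only exist when $x$ lies in the range covered by Theorem~2.0, to close your proof.
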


\begin{proof} $2)\Rightarrow 1).$ Recall that, by the definition of $\tilde{U}$, we have $U(z)\leq\tilde{U}(y)+yz$ for all $y,z>0$. Hence, $\Ex[\tilde{U}^+(yLZ)]<\infty$ and $\Ex[Zg]=\Ex^\Q[g]=x$ imply $\Ex[U^+(g/L)]\leq \Ex[\tilde{U}^+(yLZ)+yLZ(g/L)]= \Ex[\tilde{U}^+(yLZ)]+y\Ex[Zg]<\infty$. Take an arbitrary $\tilde{g}\in\mathcal{C}(x)$. By a standard argument, using concavity of $U$ we obtain
\begin{align*}
\Ex[U(\tilde{g}/L)]&=\Ex[U(g/L)]+\Ex[U(\tilde{g}/L)-U(g/L)]\leq \Ex[U(g/L)]+\Ex[U'(g/L)(\tilde{g}/L-g/L)]\\
&=\Ex[U(g/L)]+\Ex[yLZ(\tilde{g}/L-g/L)]=\Ex[U(g/L)]+y\Ex[Z(\tilde{g}-g)]\leq\Ex[U(g/L)],
\end{align*}
where the last inequality follows from $\Ex[Z\tilde{g}]\leq x=\Ex[Zg]$, and equality in the first line holds only if $\PP-$almost surely $\tilde{g}=g$, by the strict concavity of $U$. This shows that $g$ is the unique optimizer in $\mathcal{C}(x)$.

$1)\Rightarrow 2).$ Since $\tilde{U}(y)\geq U(z)-yz$ for all $y,z>0$, for $\alpha=x/\Ex^\Q[L]$ we obtain $\Ex[-\tilde{U}^-(yLZ)]\geq \Ex[-U^-(\alpha) -\alpha yLZ]=-U^-(\alpha)-xy>-\infty,$
i.e., $\tilde{U}^-(yLZ)$ is integrable for every $y>0$.
 Let us define (the dual value function)
$$v(y)=\Ex[\tilde{U}(yLZ)]\in(-\infty,\infty],\quad y>0.$$
Since $u>-\infty$ on $(0,\infty)$, $u(x)= \Ex[U(g/L)]<\infty$ and $u$ is concave, $u$ is finite on $(0,\infty)$. In particular, the function $-u(-\cdot)$, where $u$ is extended to take value $-\infty$ on $(-\infty,0]$, is a proper convex function on $\R$ (here we use a different definition of a proper function than the one used earlier for maps from $\R^N$ to $\R^N$; a convex function is called proper if it never takes value $-\infty$ and is not identically equal to $\infty$). The function $v$ is non-increasing and convex on $(0,\infty)$. Next we verify that $v$, extended by the value $\infty$ on $(-\infty,0]$, is the conjugate function of $u$ (i.e., the convex conjugate of $-u(-\cdot)$, see for example \cite[p.~104]{rockafellar}), that is, $v(y)=\sup_{z>0}[u(z)-yz]$ for all $y>0$. For arbitrary fixed $y,z>0$, we have
\begin{equation}\label{eq:conjugacy}
\begin{aligned}
\sup_{z>0}[u(z)-yz]&=\sup_{z>0,\ \tilde{g}\in\mathcal{C}(z)}\Ex[U(\tilde{g}/L)-yz]=\sup_{z>0,\ \tilde{g}\in\mathcal{C}(z)}\Ex[U(\tilde{g}/L)-(yLZ)\tilde{g}/L]\\
&\leq\sup_{\tilde{g}\geq 0}\Ex[U(\tilde{g}/L)-(yLZ)\tilde{g}/L]=\Ex[\tilde{U}(yLZ)]=v(y).
\end{aligned}
\end{equation}
Define $\tilde{U}^n(y)=\sup_{z\in(0,n)}[U(z)-yz]$, for $y>0$, $n\in\mathbb{N}$. Then since $\tilde{g}\wedge n\in\mathcal{C}(z)$ ($\tilde{g}\wedge n$ is the pointwise minimum of $\tilde{g}$ and $n$) for every $n$, every random variable $\tilde{g}\geq 0$, and for $z$ large enough, we obtain
$$\sup_{z>0,\ \tilde{g}\in\mathcal{C}(z)}\Ex[U(\tilde{g}/L)-(yLZ)\tilde{g}/L]\geq\sup_{\tilde{g}\geq 0}\Ex[U((\tilde{g}\wedge n)/L)-(yLZ)(\tilde{g}\wedge n)/L]=\Ex[\tilde{U}^n(yLZ)]=v^n(y),$$
where the last equality is the definition of $v^n$. Thus, in order to prove equality in \eqref{eq:conjugacy}, it remains to show that $v^n(y)\to v(y)$ for every $y>0$ as $n\to\infty$. Since $\tilde{U}^n\leq\tilde{U}^{n+1}\leq\tilde{U}$ for every $n\in\mathbb{N}$, we have $v^n\leq v^{n+1}\leq v$. On the other hand, for every $n\geq 2$, $\tilde{U}^n\geq\tilde{U}^2=\tilde{U}$ on $[U'(2),\infty)$ and $\tilde{U}^n\geq\tilde{U}^2=U(2)-2U'(2)$ on $(0,U'(2))$. Since $U$ is concave, $U'(2)<U(2)-U(1)$ and, summing up,
$$\tilde{U}^n\geq \tilde{U}\mathbbm{1}_{[U'(2),\infty)}+(2U(1)-U(2))\mathbbm{1}_{(0,U'(2))},\quad n\geq2.$$
Using this lower bound (uniform in $n$), integrability of $\tilde{U}^-(yLZ)$, and monotone convergence for $\tilde{U}^n(yLZ)\uparrow\tilde{U}(yLZ)$, we obtain $\lim_{n\to\infty}v^n(y)=v(y)$ for every $y>0$ and equality in place of inequality in \eqref{eq:conjugacy}.

By Theorem~12.2 in \cite{rockafellar}, $v$ is a \emph{closed proper} convex function. Taking into account that $v$ is non-increasing, closedness and properness imply that $v$ is right-continuous, hence for $y_0=\inf\{y: v(y)<\infty\}\geq 0$ the function $v$ is continuous on $[y_0,\infty)$, possibly with $v(y_0)=\infty$. By the same theorem, the bidual relation holds: $u(z)=\inf_{y>0}[v(y)+yz]$ for all $z>0$. In particular,
$u(x)=\inf_{y>0}[v(y)+xy]$. Note that the function $y\mapsto v(y)+xy$ is convex, continuous on $[y_0,\infty)$, and tends to $\infty$ as $y\to\infty$. Indeed, the Inada condition $\tilde{U}'(\infty)=0$ implies that for every $\varepsilon>0$ there exists a constant $C(\varepsilon)$ such that $\tilde{U}(y)>-\varepsilon y+C(\varepsilon)$ for all $y>0$. Hence
$$v(y)+xy=\Ex[\tilde{U}(yLZ)]+xy\geq \Ex[-\varepsilon (yLZ)+C(\varepsilon)]+xy=C(\varepsilon)+y(x-\varepsilon\Ex[LZ])\nearrow\infty$$ 
as $y\uparrow\infty$ for $\varepsilon>0$ small enough.
These properties imply that the finite infimum $u(x)$ of $y\mapsto v(y)+xy$ is attained at some $y\in[y_0,\infty)$ and, by the optimality of $g$, we obtain
$$\Ex[U(g/L)]=u(x)=v(y)+xy=\Ex[\tilde{U}(yLZ)+(yLZ)g/L]\geq \Ex[U(g/L)].$$
Equality may hold in place of inequality only if $U'(g/L)=yLZ$, $\PP-$almost surely. Note that $y>0$ since otherwise, by this formula, $g=\infty$ and $x=\Ex[Zg]=\infty$, a contradiction. The integrability of $\tilde{U}^+(yLZ)$ follows from finiteness of $u(x)$, hence $v(y)$, completing the proof.
\begin{comment}By the same theorem, the function $u^{**}(z)=\inf_{y>0}[v(y)+yz]$, $z\in\R$, is precisely the smallest upper-semicontinuous function satisfying $u^{**}\geq u$ (see \cite[p.~52]{rockafellar} for the definition of the closure of a convex function). Since $u$ is non-decreasing, $u^{**}$ is the right-continuous version of $u$. In particular, $u^{**}$ and $u$ agree on $(x_0,\infty)$, where $x_0=\inf\{z>0: u(z)>-\infty\}$.\end{comment}
 \end{proof}

\begin{rem}\label{rem:non-existence-optimizer}
In case $L\equiv 1$, Theorem \ref{thm:duality-for-complete} can be compared to Theorem~2.0 in \cite{kramkov_asymptotic_1999}. The latter states that (i) the dual value function $v$ is $C^1$ and strictly convex on $(y_0,\infty)$ with $v'(\infty)=0$ and $-v'(y)=\Ex^\Q[I(yZ)]$ for $I=(U')^{-1}$ and all $y>y_0$; (ii) for $x_0=\lim_{y\downarrow y_0}\Ex^\Q[I(yZ)]$ the primal optimizer exists for all $x<x_0$ and is given by $I(yZ)$, where $y>y_0$ is such that $x=\Ex^\Q[I(yZ)]$. By Theorem \ref{thm:duality-for-complete}, on the other hand, an optimizer exists for $x>0$ if and only if there exists $y>0$ such that $v(y)<\infty$ and $x=\Ex^\Q[I(yZ)]$, in which case the optimizer is given by $I(yZ)$. It follows from the two theorems combined that an optimizer for $u(x)$ exists \emph{only} for (i) $x<x_0$, (ii) $x=x_0=\Ex^\Q[I(y_0Z)]$ if $x_0<\infty$ and $v(y_0)<\infty$, in which case it is given by $I(y_0Z)$. In particular, there is no optimizer for $x>x_0$.
\end{rem}
%%%%%%%%%%%%%%%%%%%%%%

\section{Invertibility of $G$}\label{appendix:G-inv}

\begin{proof}[Proof of Proposition \ref{prop:G-invertible}]
The map $G$ is continuously differentiable because $V_i\in C^1$. By the lower bound on the relative risk aversion of $U_i$, $-V'_i(y)\geq\varepsilon_i+\frac{\mu_i}{1+\mu_i}$ for every $y\in\R$. To show that $G:\R^N\to\R^N$ (and therefore $F:\R^N_+\to\R^N_+$) is invertible, we will show that for every $z=(z^1,...,z^N)\in\R^N$ there exists a unique $y=(y^1,...,y^N)\in\R^N$ such that
\begin{equation}\label{eq:invert-G1}
\begin{aligned}
z^i&=V_i\left(y^i-\sum_{j\neq i}\mu_iy^j\right)-\sum_{j\neq i}\mu_iy^j\\
&=\left[V_i\left(y^i-\sum_{j\neq i}\mu_iy^j\right)+\frac{\mu_i}{1+\mu_i}\left(y^i-\sum_{j\neq i}\mu_iy^j\right)\right]-\frac{\mu_i}{1+\mu_i}\sum_{j=1}^N y^j,\quad  i\in\mathcal{N}.
\end{aligned}
\end{equation}
The function $u\mapsto V_i(u)+\frac{\mu_iu}{1+\mu_i}$ is strictly decreasing, bijective from $\R$ to $\R$, and has a strictly decreasing bijective inverse $W_i:\R\to\R$ satisfying $0>W'_i\geq -\frac{1}{\varepsilon_i}$, for every $i\in\mathcal{N}$. To solve \eqref{eq:invert-G1} for $y$, let us denote $s=\sum_{j=1}^N y^j$ and rewrite \eqref{eq:invert-G1} as
\begin{equation}\label{eq:invert-G2}
W_i\left(z^i+\frac{\mu_i}{1+\mu_i}s\right)=y^i-\sum_{j\neq i}\mu_iy^j=(1+\mu_i)y^i-\mu_is,\quad i\in\mathcal{N}.
\end{equation}
Dividing both sides by $1+\mu_i$ and summing over $i$, we obtain
\begin{equation}\label{eq:invert-G3}
\sum_{i=1}^N\frac{1}{1+\mu_i}W_i\left(z^i+\frac{\mu_i}{1+\mu_i}s\right)=\left(1-\sum_{i=1}^N\frac{\mu_i}{1+\mu_i}\right)s.
\end{equation}
The left-hand side of \eqref{eq:invert-G3} is a strictly decreasing function of $s$, bijective from $\R$ to $\R$. The right-hand side of \eqref{eq:invert-G3} is a non-decreasing linear function of $s$, since $\mu_i\in[0,
\frac{1}{N-1}]$ so that $\frac{\mu_i}{1+\mu_i}\in[0,\frac{1}{N}]$ and $\sum_{i=1}^N\frac{\mu_i}{1+\mu_i}\in[0,1]$. This means that \eqref{eq:invert-G3} has a unique solution $s\in\R$, which in turn uniquely determines $(y^1,...,y^N)\in\R^N$ through \eqref{eq:invert-G2}, completing the proof of invertibility of $G$ under the lower bounds on $RRA[U_i]$, $i\in\mathcal{N}$.

For a fixed $y=(y^1,...,y^N)\in\R^N$, the Jacobian of $G$ at $y$ is given by
\begin{equation}\label{eq:G-Jacobian}
JG(y)=\begin{pmatrix}
v_1 & -v_1\mu_1-\mu_1 & ... & -v_1\mu_1-\mu_1\\
-v_2\mu_2-\mu_2 & v_2 & ... & -v_2\mu_2-\mu_2\\
\vdots & \vdots & \ddots & \vdots \\
-v_N\mu_N-\mu_N & -v_N\mu_N-\mu_N & ... & v_N
\end{pmatrix},
\end{equation}
where $v_i=V'_i\left(y^i-\sum_{j\neq i}\mu_iy^j\right)\in\left(-\infty,-\varepsilon_i-\frac{\mu_i}{1+\mu_i}\right]$ for all $i\in\mathcal{N}$, and assume that either (i)~$\mu_i<\frac{1}{N-1}$ for some $i\in\mathcal{N}$, or (ii) $-v_i\leq R_i$ for all $y$ for some $i\in\mathcal{N}$. By Lemma \ref{lem:inverse-JG}, under either of these two conditions, $JG(y)$ is non-degenerate for all $y\in\R^N$ and the inverse $[JG(y)]^{-1}$ is bounded uniformly in $y\in\R^N$. Thus, by the inverse function theorem, $G^{-1}$ is continuously differentiable and $JG^{-1}(z)=[JG\vert_{G(z)}]^{-1}$ is bounded uniformly in $z\in\R^N$, i.e., $G^{-1}$ is globally Lipschitz.
\end{proof}

\begin{lem}\label{lem:inverse-JG}
Let $\mu_i\in[0,\frac{1}{N-1}]$, $\varepsilon_i>0$, and consider the following two situations:
\begin{enumerate}
\item For at least one $i\in\mathcal{N}$, $\mu_i<\frac{1}{N-1}$. We let $v_j$ be any number satisfying $-v_j\geq\varepsilon_j+\frac{\mu_j}{1+\mu_j}$ for all $j\in\mathcal{N}$.
\item For some $i\in\mathcal{N}$ we fix a constant $R_i\geq\varepsilon_i+\frac{\mu_i}{1+\mu_i}$ and let $v_i$ be any number in the range $\varepsilon_i+\frac{\mu_i}{1+\mu_i}\leq -v_i\leq R_i$. We let $-v_j\geq\varepsilon_j+\frac{\mu_j}{1+\mu_j}$ for $j\neq i$.
\end{enumerate}
Then, under either of the two conditions and with the corresponding ranges for $v_i$'s, the matrix
$$M=\begin{pmatrix}
v_1 & -v_1\mu_1-\mu_1 & ... & -v_1\mu_1-\mu_1\\
-v_2\mu_2-\mu_2 & v_2 & ... & -v_2\mu_2-\mu_2\\
\vdots & \vdots & \ddots & \vdots \\
-v_N\mu_N-\mu_N & -v_N\mu_N-\mu_N & ... & v_N
\end{pmatrix}$$
is invertible and $\norm{M^{-1}}$ is bounded uniformly for $v_i$'s lying in the specified ranges.
\end{lem}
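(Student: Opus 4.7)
The matrix $M$ has a diagonal-plus-rank-one structure: setting $a_i=-\mu_i(1+v_i)$ and $d_i:=v_i-a_i=v_i(1+\mu_i)+\mu_i$, one can write $M=D+a\mathbbm{1}^T$ with $D=\mathrm{diag}(d_1,\ldots,d_N)$ and $a=(a_1,\ldots,a_N)^T$. This form strongly suggests applying the matrix determinant lemma together with the Sherman--Morrison formula, reducing everything to a single scalar quantity $1+\mathbbm{1}^T D^{-1}a$.

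\textbf{Steps.} First, I would introduce the parameterization $u_i:=-v_i$ and $\beta_i:=\mu_i/(1+\mu_i)\in[0,1/N]$, so that the standing hypothesis $-v_i\geq\varepsilon_i+\mu_i/(1+\mu_i)$ reads $u_i-\beta_i\geq\varepsilon_i$. This yields $d_i=-(1+\mu_i)(u_i-\beta_i)\leq-\varepsilon_i(1+\mu_i)<0$, so $D$ is invertible with $|1/d_i|\leq1/\varepsilon_i$ uniformly. Next, I would compute $a_i/d_i=\beta_i(1-u_i)/(u_i-\beta_i)$ and, by writing the numerator as $\beta_i(1-\beta_i)-\beta_i(u_i-\beta_i)$, derive the key identity
$$1+\sum_{i=1}^N\frac{a_i}{d_i}=1-\sum_{i=1}^N\beta_i+\sum_{i=1}^N\frac{\beta_i(1-\beta_i)}{u_i-\beta_i}.$$

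\textbf{Main obstacle.} I expect the crux to be establishing a strictly positive lower bound on this quantity that is uniform over the admissible $v_i$'s, since this denominator controls both $\det M$ and the correction term in Sherman--Morrison. Both summands on the right are non-negative: $\beta_i\leq1/N$ implies $\sum_i\beta_i\leq1$, and $u_i-\beta_i\geq\varepsilon_i>0$. In Case~1, the strict inequality $\mu_{i^*}<1/(N-1)$ for some $i^*$ forces $\beta_{i^*}<1/N$, hence $1-\sum_i\beta_i>0$ is a fixed positive constant, independent of the $v_j$'s. In Case~2, for the distinguished index $i$ with $u_i\leq R_i$ I would distinguish two subcases: if $\mu_i>0$ then $\beta_i\in(0,1/N]$ and the $i$-th term of the last sum is bounded below by $\beta_i(1-\beta_i)/(R_i-\beta_i)>0$ (note $\beta_i<1$ since $N\geq2$); if $\mu_i=0$ then $\beta_i=0<1/N$ strictly, so Case~1 applies to index $i$ and we are done by the previous argument. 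Either way $1+\sum_k a_k/d_k\geq c$ for a positive constant $c$ depending only on $(\mu_j,\varepsilon_j)_{j\in\mathcal{N}}$ and $R_i$.

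\textbf{Conclusion.} The matrix determinant lemma gives $\det M=(\prod_i d_i)\,(1+\mathbbm{1}^T D^{-1}a)\neq0$, so $M$ is invertible, and the Sherman--Morrison formula yields
$$M^{-1}=D^{-1}-\frac{D^{-1}a\,\mathbbm{1}^T D^{-1}}{1+\mathbbm{1}^T D^{-1}a}.$$
The entries $1/d_i$ and $a_i/d_i=\beta_i(1-\beta_i)/(u_i-\beta_i)-\beta_i$ are bounded by constants depending only on the fixed data $(\mu_j,\varepsilon_j)$, and the denominator is bounded below by $c>0$ by the previous step. Therefore each entry of $M^{-1}$, and hence any matrix norm $\|M^{-1}\|$, is bounded uniformly over all $v_i$'s in the ranges prescribed by the lemma.
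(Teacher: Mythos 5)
Your proof is correct and is in substance the same as the paper's: the paper solves $Mu_k=e_k$ by introducing $s=\sum_i u_{ik}$ and reducing to one scalar equation, which is exactly the Sherman--Morrison reduction for the rank-one perturbation $M=D+a\mathbbm{1}^T$ that you invoke, and your key quantity $1+\mathbbm{1}^TD^{-1}a$ coincides term-for-term with the paper's $1-\sum_i\frac{(v_i+1)\mu_i/(1+\mu_i)}{v_i+\mu_i/(1+\mu_i)}$, including the same two-case argument for bounding it away from zero. Your packaging via the matrix determinant lemma and the explicit inverse formula is arguably cleaner, but it buys nothing essentially new.
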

\begin{proof}
The assertions of the lemma hold if and only if for every $k\in\mathcal{N}$, there exists a vector $u_k=(u_{1k},...,u_{Nk})\in\R^N$ such that $Mu_k=e_k$ and $u_k$ is bounded by a constant independent of $v_i$'s. Indeed, $M^{-1}=(u_1,...,u_N)$ in this case. We write down the $N$ equations for $u_k$:
\begin{equation}\label{eq:system1}
\begin{cases}
v_iu_{ik}-(v_i+1)\mu_i\sum_{j\neq i} u_{jk}=0,&\quad i\neq k,\\
v_ku_{kk}-(v_k+1)\mu_k\sum_{j\neq k}u_{jk}=1,&\quad i=k.
\end{cases}
\end{equation}
Denoting $s=\sum_{i\in\mathcal{N}}u_{ik}$, we can rewrite \eqref{eq:system1} as
\begin{equation}\label{eq:system2}
\begin{cases}
u_{ik}(v_i+\frac{\mu_i}{1+\mu_i})=(v_i+1)\frac{\mu_i}{1+\mu_i}s,&\quad i\neq k,\\
u_{kk}(v_k+\frac{\mu_k}{1+\mu_k})=\frac{1}{1+\mu_k}+(v_k+1)\frac{\mu_k}{1+\mu_k}s,&\quad i=k.
\end{cases}
\end{equation}
Dividing \eqref{eq:system2} by $v_i+\frac{\mu_i}{1+\mu_i}$ and summing over $i\in\mathcal{N}$, we obtain the equation for $s$:
\begin{equation}\label{eq:s}
s=\frac{1}{(1+\mu_k)(v_k+\frac{\mu_k}{1+\mu_k})}+\left(\sum_{i\in\mathcal{N}}\frac{(v_i+1)\frac{\mu_i}{1+\mu_i}}{v_i+\frac{\mu_i}{1+\mu_i}}\right)s.
\end{equation}
Since $\frac{(v_i+1)\frac{\mu_i}{1+\mu_i}}{v_i+\frac{\mu_i}{1+\mu_i}}=\frac{\mu_i}{1+\mu_i}+\frac{\mu_i}{(1+\mu_i)^2}\cdot\frac{1}{v_i+\frac{\mu_i}{1+\mu_i}}$, where
\begin{enumerate}
\item the first summand is at most $\frac{1}{N}$ and is equal to $\frac{1}{N}$ if and only if $\mu_i=\frac{1}{N-1}$;
\item the second summand is strictly negative and is bounded away from zero if $-v_i$ is bounded away from infinity (by any constant $R_i$);
\end{enumerate}
in either of the two scenarios from the statement of the lemma, we obtain that the expression
$$1-\sum_{i\in\mathcal{N}}\frac{(v_i+1)\frac{\mu_i}{1+\mu_i}}{v_i+\frac{\mu_i}{1+\mu_i}}>0$$
is bounded away from zero by a strictly positive constant independent of $v_i$'s. Furthermore, $\frac{1}{v_i+\frac{\mu_i}{1+\mu_i}}\in[-\frac{1}{\varepsilon_i},0)$ for all $i\in\mathcal{N}$. Hence, \eqref{eq:s}, or equivalently,
$$\left(1-\sum_{i\in\mathcal{N}}\frac{(v_i+1)\frac{\mu_i}{1+\mu_i}}{v_i+\frac{\mu_i}{1+\mu_i}}\right)s=\frac{1}{(1+\mu_k)(v_k+\frac{\mu_k}{1+\mu_k})}$$ has a unique solution $s$ bounded by a constant independent of $v_i$'s. Finally, expressing $u_{ik}$ from \eqref{eq:system2},
\begin{equation*}
\begin{cases}
u_{ik}=\frac{v_i+1}{v_i+\frac{\mu_i}{1+\mu_i}}\frac{\mu_i}{1+\mu_i}s=
\left(1+\frac{1}{(1+\mu_i)\left(v_i+\frac{\mu_i}{1+\mu_i}\right)}\right)\frac{\mu_i}{1+\mu_i}s,&\quad i\neq k,\\
u_{kk}=\frac{1}{(1+\mu_k)\left(v_k+\frac{\mu_k}{1+\mu_k}\right)}+\left(1+\frac{1}{(1+\mu_k)\left(v_k+\frac{\mu_k}{1+\mu_k}\right)}\right)\frac{\mu_k}{1+\mu_k}s,&\quad i=k,
\end{cases}
\end{equation*}
we find that $u_k=(u_{1k},...,u_{Nk})$ is uniquely determined and is bounded uniformly for $v_i$'s within the specified ranges.
\end{proof}

%%%%%%%%%%%%%%%%%%%%%%

\section{Proofs for Section \ref{section:inv-h}}\label{appendix:proofs}

The operator norm on $\R^{N\times N}$ associated with the $\infty$-norm on $\R^N$ is denoted by $\norm{\cdot}_\infty$. We recall that for a matrix $M=(m_{ij})_{i,j}\in\R^{N\times N}$, with $i$-th row denoted by $m_i$, we have
$$\norm{M}_{\infty}=\sup_{0\neq x\in\R^N}\frac{\norm{Mx}_\infty}{\norm{x}_\infty}=\max_{1\leq i\leq N}\sum_{j=1}^N\vert m_{ij}\vert=\max_{1\leq i\leq N}\norm{m_i}_1.$$

\begin{proof}[Proof of Proposition \ref{prop:close-to-CRRA}]

\textbf{\boldmath Step 1: Invertibility of $h$.} We need to show that $h$ is proper and $Jh$ is invertible if we take $\varepsilon$ small. By \eqref{eq:G-Jacobian}, \eqref{eq:J-CRRA}, and \eqref{eq:almost-CRRA}, we have
$$\norm{JG(y)-J_c}_\infty=\sup_{i\in\mathcal{N}}\left[1+(N-1)\mu_i\right]\cdot\vert v_i-r_i\vert\leq 2\varepsilon,\quad \text{for all }y\in\R^N.$$
By Lemma \ref{lem:inverse-est1}, for any $\delta>0$ there exists $\varepsilon>0$ such that $\norm{JG(y)-J_c}_\infty\leq 2\varepsilon$ implies that $JG(y)$ is invertible and $\norm{[JG(y)]^{-1}-J_c^{-1}}_\infty\leq\delta$. The equality \eqref{eq:h-Jacobian} states that $\nabla h^i(D)$ is the expectation of $\nabla [G^{-1}]^i(D+\ln Z\cdot \mathbbm{1})$ under the probability measure with relative density $\frac{\mathcal{H}^i(D)}{\Ex^\Q\left[\mathcal{H}^i(D)\right]}$ with respect to $\Q$. Since $\norm{\cdot}_1$ is convex, Jensen's inequality implies
$$\norm{\nabla h^i(D)-e_i^TJ_c^{-1}}_1\leq \sup_{\omega\in\Omega} \norm{\nabla [G^{-1}]^i(D+\ln Z\cdot \mathbbm{1})-e_i^TJ_c^{-1}}_1\leq \delta,\quad\text{for all }D\in\R^N,\ i\in\mathcal{N}.$$
Hence $\norm{Jh(D)-J_c^{-1}}_\infty=\max_{i\in\mathcal{N}}\norm{\nabla h^i(D)-e_i^TJ_c^{-1}}_1\leq\delta$ for all $D\in\R^N$. This means that the map $h$ is ``close to" the linear map $D\mapsto h(0)+ J_c^{-1}D$, therefore it should be invertible and proper. We formalize this statement next. By Lemma~\ref{lem:inverse-est1} again, $Jh$ is invertible if $\delta$ is small enough. By the mean value theorem, for every $D\in\R^N$ and $i\in\mathcal{N}$ there exists $\theta_i\in[0,1]$ such that $h^i(D)-h^i(0)=\nabla h^i(\theta_i D)D$, hence
$$\vert h^i(D)-h^i(0)-e_i^TJ_c^{-1}D\vert=\vert\left\{\nabla h^i(\theta_i D)-e_i^TJ_c^{-1} \right\}D\vert \leq \norm{\nabla h^i(\theta_i D)-e_i^TJ_c^{-1}}_1\norm{D}_\infty\leq \delta\norm{D}_\infty$$
and so $\norm{h(D)-h(0)-J_c^{-1}D}_\infty\leq \delta\norm{D}_\infty$. In particular, by the triangle inequality,
$$\norm{h(D)}_\infty\geq \norm{J_c^{-1}D}_\infty-\norm{h(0)}_\infty-\delta\norm{D}_\infty\geq \norm{J_c}^{-1}_\infty\norm{D}_\infty-\norm{h(0)}_\infty-\delta\norm{D}_\infty.$$
Clearly, for $\delta$ small enough the right-hand side tends to $\infty$ as $\norm{D}_\infty\to\infty$, and so does the left-hand side, i.e., $h$ is proper.

\textbf{Step 2: Convergence of optimal strategies.} Let us fix $(x_0^1,...,x_0^N)\in\R^N_+$ and write
\begin{align*}
(X^{1,*}_{T,c},...,X^{N,*}_{T,c})&=\mathcal{H}_c(D_c),\quad \text{where } D_c=h_c^{-1}\circ\ln(x^1_0,...,x^N_0),\\
(X^{1,*}_{T},...,X^{N,*}_{T})&=\mathcal{H}(D),\quad \text{where } D=h^{-1}\circ\ln(x^1_0,...,x^N_0).
\end{align*}
Without loss of generality, we can assume that for $V_i$, $i\in\mathcal{N}$, defined as in \eqref{eq:V-from-U}, $V_i(0)=0$. This amounts to normalizing the utility functions $U_i$ so that $U_i'(1)=1$, which does not affect the optimization problem \eqref{eq:util-max} and the corresponding optimal strategies. Hence, we can assume that $G(0)=0$.

\textit{Claim: $h^{-1}(z)\to h^{-1}_c(z)$ for every $z\in\R^N$ as $\varepsilon\downarrow0$ in \eqref{eq:almost-CRRA}.}

\textit{Proof of the Claim:} Since $G^{-1}(0)=G^{-1}_c(0)=0$ and $JG^{-1}(z)\to\ J^{-1}_c=JG^{-1}_c(z)$ uniformly in $z\in\R^N$, $G^{-1}\to G_c^{-1}$ pointwise. The definition \eqref{eq:def-of-hH} of $h$, pointwise convergence $G^{-1}\to G_c^{-1}$, and the fact that $G^{-1}$ and $G_c^{-1}$ are globally Lipschitz and all moments of $Z$ are finite imply that $h(0)\to h_c(0)$ as $\varepsilon\downarrow 0$ by the dominated convergence theorem. By linearity of $h_c^{-1}$, this implies $h^{-1}_{c}(h(0))\to 0$ as $\varepsilon\downarrow 0$. For an arbitrary $z\in\R^N$, by the mean value theorem applied (coordinate-wise) to function $h^{-1}-h_c^{-1}$, we can write
\begin{align*}
\norm{h^{-1}(z)- h^{-1}_c(z)}&\leq \norm{[h^{-1}(z)-h^{-1}_c(z)]-[h^{-1}(h(0))-h_c^{-1}(h(0))]}+\norm{h_c^{-1}(h(0))}\\
&\leq \left(\sup_{\R^N}\norm{Jh^{-1}-J_c}\right)\cdot \norm{z-h(0)}+\norm{h_c^{-1}(h(0))}\to 0\quad \text{as }\varepsilon\downarrow 0,
\end{align*}
where the uniform convergence $\norm{Jh^{-1}-J_c}\to 0$ again follows by Lemma \ref{lem:inverse-est1} and by the uniform convergence $\norm{Jh-J_c^{-1}}\to 0$ shown before. $\qed$

The above claim shows that $D\to D_c$ as $\varepsilon\downarrow 0$. By Lemma \ref{lem:continuity}, $\mathcal{H}(D)\to\mathcal{H}(D_c)$ almost surely and in $L^p$. Therefore, in order to show the convergence $\mathcal{H}(D)\to\mathcal{H}_c(D_c)$, it remains to show that $\mathcal{H}(D_c)\to\mathcal{H}_c(D_c)$ almost surely and in $L^p$ as $\varepsilon\downarrow 0$. Almost sure convergence follows by pointwise convergence $G^{-1}\to G^{-1}_c$. Convegence in $L^p$ for all $p>0$ follows by pointwise convergence and by the dominated convergence theorem due to uniform bounds on $\norm{JG^{-1}}$ and finiteness of all moments of $Z$.
\end{proof}

\begin{proof}[Proof of Proposition \ref{prop:lambda-small}]
\textbf{\boldmath Step 1: Invertibility of $h$.} Let $r_i=\varepsilon_i+\frac{\mu_i}{1+\mu_i}$, where $\varepsilon_i>0$ is given as in Assumption \ref{ass:utility2} for the starting value of $\lambda_i$, so that $R_i\geq RRA[U_i](x)\geq r_i$ for all $x>0$. In the proof below, we choose $\varepsilon>0$ small enough so that the condition
\begin{equation}\label{eq:mu-small}
\mu_i\cdot \max(|R_i-1|,|1-r_i|)\leq \varepsilon,\quad \text{for all }i\in\mathcal{N},
\end{equation}
implies that $h$ is proper and has a non-degenerate Jacobian at every point. Condition \eqref{eq:mu-small} ensures that all non-diagonal entries of the Jacobian $JG(y)$ in \eqref{eq:G-Jacobian} are bounded by $\varepsilon$, hence
\begin{equation}\label{eq:eps-delta}
\norm{JG(y)-JG_n(y)}_\infty\leq (N-1)\varepsilon,\quad \text{for all }y\in\R^N,
\end{equation}
where $G_n:\R^N\to\R^N$ is defined as in \eqref{eq:no-comp-functions}, so $JG_n(y)$ is a diagonal matrix with diagonal entries equal to those of $JG(y)$. By Lemma~\ref{lem:inverse-est1}, for every $\delta>0$ there exists $\varepsilon>0$ such that \eqref{eq:eps-delta} implies
\begin{equation}\label{eq:decompose-JG}
\norm{JG^{-1}(z)-JG_n^{-1}\vert_{y=G_n\circ G^{-1}(z)}}_\infty\leq \delta,\quad \text{for all }z\in\R^N.
\end{equation}
For every $z\in\R^N$, we have $[JG_n\vert_{y=G_n\circ G^{-1}(z)}]_{ii}=v_i\leq R_i$ with $v_i$ defined as in \eqref{eq:G-Jacobian}, hence $[JG_n^{-1}\vert_{y=G_n\circ G^{-1}(z)}]_{ii}\geq \frac{1}{R_i}>0$ is bounded away from zero uniformly in $z\in\R^n$ for every $i\in\mathcal{N}$. By decomposing for each $z\in\R^N$ the Jacobian $JG^{-1}(z)$ in this way -- into a diagonal matrix with the $i$'s diagonal entry uniformly bounded away from zero by $\frac{1}{R_i}$ for every $i\in\mathcal{N}$ and a matrix with $\norm{\cdot}_\infty$-norm at most $\delta$, -- and recalling the formula \eqref{eq:h-Jacobian} for the Jacobian  $Jh(D)$ as a ``weighted average" of $JG^{-1}(D+\ln Z\cdot\mathbbm{1})$, we conclude that the same property holds for $Jh(D)$ for every $D\in\R^N$: there exists a diagonal matrix $M(D)$ with $M_{ii}(D)\geq \frac{1}{R_i}$ for $i\in\mathcal{N}$ and such that $\norm{Jh(D)-M(D)}_\infty\leq\delta$. By Lemma~\ref{lem:inverse-est2}, for $\delta>0$ sufficiently small (and independent of $D$) this implies invertiblity of $Jh(D)$ and a uniform upper bound on $\norm{(Jh(D))^{-1}}_\infty$. By the mean value theorem, for every $D\in\R^N$ and $i\in\mathcal{N}$ there exists $\theta_i\in[0,1]$ such that $h^i(D)-h^i(0)=\nabla h^i(\theta_i D)D$. Hence $h(D)-h(0)=\mathcal{J}(D)D$, where the matrix $\mathcal{J}(D)\in\R^{N\times N}$ is defined so that its $i$-th row is equal to $\nabla h^i(\theta_i D)$. In particular, there exists a diagonal matrix $\mathcal{M}(D)\in\R^{N\times N}$ such that $\mathcal{M}_{ii}(D)\geq \frac{1}{R_i}$ for $i\in\mathcal{N}$ and $\norm{\mathcal{J}(D)-\mathcal{M}(D)}_\infty\leq\delta$. We have
$$\norm{h(D)}_\infty\geq \norm{\mathcal{J}(D)D}_\infty-\norm{h(0)}_\infty\geq \min_{i\in\mathcal{N}}{\frac{1}{R_i}}\cdot\norm{D}_\infty-\norm{h(0)}_\infty-\delta\norm{D}_\infty,$$ 
and therefore $h$ is proper for $\delta$ small enough.

\textbf{Step 2: Convergence of optimal strategies.} Let us fix $(x_0^1,...,x_0^N)\in\R^N_+$ and write
\begin{align*}
(X^{1,*}_{T,n},...,X^{N,*}_{T,n})&=\mathcal{H}_n(D_n),\quad \text{where } D_n=h_n^{-1}\circ\ln(x^1_0,...,x^N_0),\\
(X^{1,*}_{T},...,X^{N,*}_{T})&=\mathcal{H}(D),\quad \text{where } D=h^{-1}\circ\ln(x^1_0,...,x^N_0).
\end{align*}
We assume that $G(0)=G_n(0)=0$ by normalizing the utility functions so that $U_i'(1)=1$ for all $i$.

\textit{Claim: Convergence $G^{-1}(z)\to G^{-1}_n(z)$, $h(z)\to h_n(z)$, and $h^{-1}(z)\to h^{-1}_n(z)$ holds for every $z\in\R^N$ as $\varepsilon\downarrow0$ in $0\leq\lambda_i\leq \varepsilon$, $i\in\mathcal{N}$.}

\textit{Proof of the Claim:} Clearly $G\to G_n$ pointwise as $\varepsilon\downarrow 0$ by \eqref{eq:express-G}. The map $G^{-1}$ is globally Lipschitz by Proposition \ref{prop:G-invertible} and, as can be seen from the proof of Lemma \ref{lem:inverse-JG}, the Lipschitz constant is uniform as $\varepsilon\downarrow 0$. Hence, for every $z\in\R^N$, we obtain
$$\norm{G^{-1}(z)-G^{-1}_n(z)}=\norm{G^{-1}(z)-G^{-1}(G\circ G^{-1}_n(z))}\leq L\norm{G_n(G^{-1}_n(z))-G(G^{-1}_n(z))}\to 0$$
as $\varepsilon\downarrow0$. By the definition \eqref{eq:def-of-hH} of $h$, since $G^{-1}\to G^{-1}_n$, since the maps $G^{-1}$, $G^{-1}_n$ are globally Lipschitz with uniform in $\varepsilon$ Lipschitz constant, and since all moments of $Z$ are finite, dominated convergence theorem implies $h(z)\to h_n(z)$ as $\varepsilon\downarrow 0$ for every $z\in\R^N$. For $h^{-1}\to h^{-1}_n$ we conclude similarly as with $G^{-1}\to G^{-1}_n$ above, noting that Step 1 provides a uniform bound on $\norm{Jh^{-1}}_\infty$ for all $\varepsilon$ small enough, i.e., $h^{-1}$ is globally Lipschitz with a uniform in $\varepsilon$ Lipschitz constant. $\qed$

The above claim shows that $D\to D_n$ as $\varepsilon\downarrow 0$. Convergence $\mathcal{H}(D)\to\mathcal{H}_n(D_n)$ almost surely and in $L^p$ follows from convergence $G^{-1}\to G^{-1}_n$ same as in the proof of Proposition \ref{prop:close-to-CRRA}.
\end{proof}

In the following lemmata, we list for reference several standard linear algebra facts that are used in the proofs above.

\begin{lem}\label{lem:inverse-est1}
Let $\norm{\cdot}$ be an operator norm on $\R^{N\times N}$ generated by a norm $\norm{\cdot}$ on $\R^N$. Let $S\in\R^{N\times N}$ be an invertible matrix. If $T\in\R^{N\times N}$ satisfies $\norm{S-T}\leq\varepsilon\norm{S^{-1}}^{-1}$, where $0<\varepsilon<1$, then $T$ is also invertible and $\norm{T^{-1}}\leq\frac{1}{1-\varepsilon}\norm{S^{-1}}$. Moreover, in this case, $\norm{S^{-1}-T^{-1}}\leq \frac{\varepsilon}{1-\varepsilon}\norm{S^{-1}}$.
\end{lem}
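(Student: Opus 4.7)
The plan is to prove both claims by a standard Neumann series argument. First I would factor $T = S - (S-T) = S\bigl(I - S^{-1}(S-T)\bigr)$. Since the operator norm is submultiplicative, the hypothesis $\norm{S-T}\leq \varepsilon\norm{S^{-1}}^{-1}$ gives $\norm{S^{-1}(S-T)}\leq \norm{S^{-1}}\cdot\norm{S-T}\leq \varepsilon<1$, so $I - S^{-1}(S-T)$ is invertible with inverse $\sum_{k\geq 0}\bigl(S^{-1}(S-T)\bigr)^k$ of norm at most $\tfrac{1}{1-\varepsilon}$. Consequently $T$ is invertible and
\[
T^{-1} = \bigl(I - S^{-1}(S-T)\bigr)^{-1}S^{-1},
\]
yielding the bound $\norm{T^{-1}}\leq \tfrac{1}{1-\varepsilon}\norm{S^{-1}}$.

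For the second claim I would use the resolvent-type identity
\[
S^{-1} - T^{-1} = T^{-1}(T - S)S^{-1},
\]
verified directly by multiplying on the right by $S$ and on the left by $T$. Submultiplicativity together with the bound on $\norm{T^{-1}}$ just obtained gives
\[
\norm{S^{-1} - T^{-1}} \leq \norm{T^{-1}}\cdot\norm{T-S}\cdot\norm{S^{-1}} \leq \frac{1}{1-\varepsilon}\norm{S^{-1}}\cdot \varepsilon\norm{S^{-1}}^{-1}\cdot\norm{S^{-1}} = \frac{\varepsilon}{1-\varepsilon}\norm{S^{-1}},
\]
which is the claimed estimate.

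There is no real obstacle here: the only point requiring the assumption $\varepsilon<1$ is the convergence of the Neumann series in the first step, and everything else is mechanical manipulation using submultiplicativity of the operator norm. The statement and proof are standard, so the lemma is included merely for ease of reference when applying the two estimates in the proofs of Propositions~\ref{prop:close-to-CRRA} and~\ref{prop:lambda-small}.
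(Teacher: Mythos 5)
Your proof is correct and follows essentially the same route as the paper: both invert $T$ by writing it as $S$ times $I-S^{-1}(S-T)=S^{-1}T$ and expanding the latter's inverse as a Neumann series, and both obtain the second estimate from a resolvent-type identity via submultiplicativity (the paper orders the factors as $S^{-1}(T-S)T^{-1}$ rather than your $T^{-1}(T-S)S^{-1}$, which is immaterial). No gaps.
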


\begin{proof}
Note that $\norm{Id-S^{-1}T}\leq\norm{S^{-1}}\cdot\norm{S-T}<\varepsilon<1$. Using the Neumann series, we conclude that the matrix $S^{-1}T$ is invertible and the inverse is given by $\left(Id-(Id-S^{-1}T)\right)^{-1}=\sum_{k\geq 0}\left(Id-S^{-1}T\right)^k$. In particular, this relation implies $\norm{T^{-1}S}\leq \sum_{k\geq 0}\varepsilon^k=\frac{1}{1-\varepsilon}$, hence $\norm{T^{-1}}\leq\norm{T^{-1}S}\cdot\norm{S^{-1}}\leq\frac{1}{1-\varepsilon}\norm{S^{-1}}$. The last assertion follows from $\norm{S^{-1}-T^{-1}}\leq\norm{S^{-1}}\cdot\norm{S-T}\cdot\norm{T^{-1}}\leq \frac{\varepsilon}{1-\varepsilon}\norm{S^{-1}}$.
\end{proof}

\begin{lem}\label{lem:inverse-est2}
Let $M=(m_{ij})_{i,j}\in\R^{N\times N}$ be a strictly diagonally dominant matrix with positive diagonal entries:
$$m_{ii}-\sum_{j\neq i}\vert m_{ij}\vert\geq \varepsilon >0\quad \text{for all}\quad i=1,...,N.$$
Then $M$ is invertible and $\norm{M^{-1}}_\infty\leq \frac{1}{\varepsilon}$.
\end{lem}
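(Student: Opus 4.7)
The plan is to prove both invertibility and the norm bound in one stroke via a standard ``maximum row'' argument. Given $y \in \R^N$, I want to show that any solution $x \in \R^N$ of $Mx = y$ must satisfy $\norm{x}_\infty \leq \frac{1}{\varepsilon}\norm{y}_\infty$; this simultaneously establishes uniqueness (take $y = 0$ to get $x = 0$) and the stated bound on $\norm{M^{-1}}_\infty$. Invertibility then follows since $M$ is square.

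The key step is to pick an index $i \in \{1,\ldots,N\}$ realizing $|x_i| = \norm{x}_\infty$. Writing out the $i$-th equation $\sum_{j=1}^N m_{ij} x_j = y_i$, isolating the diagonal term, and taking absolute values yields
\begin{equation*}
m_{ii}|x_i| \leq |y_i| + \sum_{j \neq i}|m_{ij}||x_j| \leq \norm{y}_\infty + \norm{x}_\infty \sum_{j \neq i}|m_{ij}|,
\end{equation*}
where the positivity of $m_{ii}$ lets me replace $|m_{ii}|$ by $m_{ii}$. Rearranging gives
\begin{equation*}
\Bigl(m_{ii} - \sum_{j \neq i}|m_{ij}|\Bigr)\norm{x}_\infty \leq \norm{y}_\infty,
\end{equation*}
and the strict diagonal dominance assumption bounds the left factor below by $\varepsilon$, so $\norm{x}_\infty \leq \varepsilon^{-1}\norm{y}_\infty$, as desired.

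Once this a priori bound is in hand, injectivity of $M$ (and hence invertibility, since $M$ is square) is immediate from the $y = 0$ case, and the bound $\norm{M^{-1}}_\infty \leq \varepsilon^{-1}$ follows from the definition of the operator norm by substituting $x = M^{-1}y$. There is no real obstacle; the only thing to be careful about is that one must choose the index $i$ \emph{after} fixing $x$ (so the argument is pointwise in $y$) and that positivity of the diagonal entries is what allows the identification $|m_{ii}| = m_{ii}$ used above.
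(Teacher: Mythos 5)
Your proof is correct, but it takes a genuinely different route from the paper. You use the classical ``maximum row'' a priori estimate (Varah's bound): for any solution of $Mx=y$ you pick the index attaining $\norm{x}_\infty$ and read off $\varepsilon\norm{x}_\infty\leq\norm{y}_\infty$ directly from the $i$-th equation, which gives injectivity, invertibility of the square matrix, and the bound $\norm{M^{-1}}_\infty\leq\varepsilon^{-1}$ all at once. The paper instead rescales by some $m>\max_i m_{ii}$, writes $M/m=\Id-S$ with $\norm{S}_\infty\leq 1-\varepsilon/m<1$, and inverts via the Neumann series $\sum_{k\geq 0}S^k$, summing the geometric series to recover the same constant $1/\varepsilon$. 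Your argument is more elementary (no series, no limiting process) and in fact slightly more general: positivity of the diagonal is used only to identify $|m_{ii}|$ with $m_{ii}$, so the same estimate holds under the weaker hypothesis $|m_{ii}|-\sum_{j\neq i}|m_{ij}|\geq\varepsilon$. The paper's Neumann-series route, by contrast, genuinely exploits the sign of the diagonal (to make the diagonal contribution $1-m_{ii}/m$ nonnegative in the row sums of $S$) and has the mild virtue of producing an explicit series representation of $M^{-1}$, but for the purposes of this lemma the two proofs deliver identical conclusions.
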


\begin{proof}
Let $m>\max_{i}m_{ii}$ and define $S=Id-\frac{M}{m}$. Then
$$\norm{S}_\infty=\max_i \left[\left(1-\frac{m_{ii}}{m}\right)+\sum_{j\neq i}\frac{\vert m_{ij}\vert}{m}\right]
\leq 1-\frac{\varepsilon}{m}<1.$$
Using the Neumann series expansion, we obtain that $\frac{M}{m}$ is invertible and $\left(\frac{M}{m}\right)^{-1}=(Id-S)^{-1}=\sum_{k\geq 0}S^k$.
Hence, $M$ is invertible and
$m\norm{M^{-1}}_\infty\leq \sum_{k\geq 0}\norm{S}_\infty^k\leq \sum_{k\geq 0}\left(1-\frac{\varepsilon}{m}\right)^k=\frac{m}{\varepsilon}.$
\end{proof}

\end{appendices}

\textbf{Acknowledgements.} I would like to thank Mihai S\^irbu for suggesting the research direction of relative performance criteria and reading an earlier draft of the paper and Gordan \v Zitkovi\'c for his comments.

\textbf{Funding.} Partially supported by the National Science Foundation under Grant DMS-1908903 and by the University Graduate Continuing Fellowship of the University of Texas at Austin.

\footnotesize{\bibliography{bibliography}}

\end{document}